\newtheorem{lemma}{Lemma}
\newtheorem{theorem}{Theorem}
\newtheorem{proposition}{Proposition}
\theoremstyle{definition}
\newtheorem{definition}{Definition}
\newcommand{\h}{\mathcal{H}}
\newcommand{\U}{\mathcal{U}}
\newcommand{\F}{\mathcal{F}}
\newcommand{\N}{\mathbb{N}}
\newcommand{\Q}{\mathcal{Q}}
\newcommand{\NP}{$\mathcal{NP}$\xspace}
\renewcommand{\P}{$\mathcal{P}$\xspace}
\newcommand{\phcollapse}{\NP$\subseteq$ co-\NP/poly\xspace}
\newcommand{\MOS}[1]{Min Ones SAT($#1$)\xspace}
\newcommand{\FOO}[1]{\mathcal{Z}(#1)\xspace}
\newcommand{\zeroclosure}{\Delta}
\newcommand{\sunflowerrestriction}{\nabla}
\newcommand{\nonzeroclosedcore}{\Pi}
\newcommand{\mergeable}{mergeable\xspace}
\newcommand{\mergeability}{mergeability\xspace}
\newcommand{\Mergeability}{Mergeability\xspace}
\newcommand{\MergeAbility}{Mergeability\xspace}
\newcommand{\mergeoperation}{merge operation\xspace}
\title{Preprocessing of Min Ones Problems: A Dichotomy}
\author{Stefan Kratsch \and Magnus Wahlstr\"om}
\begin{document}

\maketitle

\begin{abstract}
A parameterized problem consists of a classical problem and an additional component, the so-called \emph{parameter}. This point of view allows a formal definition of preprocessing: Given a parameterized instance~$(I,k)$, a \emph{polynomial kernelization} computes an equivalent instance~$(I',k')$ of size and parameter bounded by a polynomial in~$k$.
We give a complete classification of Min Ones Constraint Satisfaction problems, i.e., \MOS{\Gamma}, with respect to admitting or not admitting a polynomial kernelization (unless \phcollapse). For this we introduce the notion of \mergeability. If all relations of the constraint language~$\Gamma$ are \mergeable, then a new variant of sunflower kernelization applies, based on non-zero-closed cores.
We obtain a kernel with~$O(k^{d+1})$ variables and polynomial total size, where~$d$ is the maximum arity of a constraint in~$\Gamma$, 
comparing nicely with the bound of~$O(k^{d-1})$ vertices for the less general and
arguably simpler~$d$-\textsc{Hitting Set} problem.  Otherwise, any relation in~$\Gamma$ that is not \mergeable permits us to construct a log-cost selection formula, i.e., an~$n$-ary selection formula with~$O(\log n)$ true local variables.
From this we can construct our lower bound using recent results by Bodlaender et al.\ as well as Fortnow and Santhanam, proving that there is no polynomial kernelization, unless \phcollapse and the polynomial hierarchy collapses to the third level.
\end{abstract}

\section{Introduction}\label{section:introduction}

Preprocessing and data reduction are ubiquitous, especially in the context of combinatorially hard problems. Of course, it is a commonplace that there can be no polynomial-time algorithm that provably shrinks every instance of an \NP-hard problem, unless~\P= \NP.
Still, 
there does in fact exist a formal notion of efficient preprocessing, coming from the field of parameterized complexity. There, problems are considered with an additional component, the so-called \emph{parameter}, intended to express the difficulty of a problem instance, e.g., solution size, nesting depth, or treewidth. This way preprocessing can be defined as a polynomial-time mapping~$K:(I,k)\mapsto(I',k')$ such that~$(I,k)$ and~$(I',k')$ are equivalent and~$k'$ as well as the size of~$I'$ are bounded by a polynomial in the parameter~$k$;~$K$ is called a \emph{polynomial kernelization}.
Parameterized complexity originated as a multivariate analysis of algorithms, motivated by the huge difference in (often trivial)~$n^{f(k)}$ versus~$f(k)n^c$ algorithms, the latter having a much better scalability. Kernelization is one possible technique to prove \emph{fixed-parameter tractability} (i.e., the existence of an~$f(k)n^c$ algorithm). Indeed it is known that a problem is fixed-parameter tractable if and only if it admits a kernelization (see~\cite{DF1998}). However, this relation does not imply kernelizations with a polynomial size bound; achieving the strongest size bounds or at least breaking the polynomial barrier is of high interest. Consider for example the list of improvements for \textsc{Feedback Vertex Set}, from the first polynomial kernel~\cite{BurrageEFLMR06}, to cubic~\cite{CubicFVS}, and now quadratic~\cite{QuadraticFVS}; the existence of a linear kernel is still an open problem.
Recently a seminal paper by Bodlaender, Downey, Fellows, and Hermelin~\cite{BodlaenderDFH08} provided the first polynomial lower bounds on the kernelizability of some problems, based on hypotheses in classical complexity. Using results by Fortnow and Santhanam~\cite{FS08}, they showed that so-called \emph{compositional} parameterized problems admit no polynomial kernelizations unless \phcollapse; by Yap~\cite{Yap83}, this would imply that the polynomial hierarchy collapses.  The existence of such lower bounds has sparked high activity in the field (see related work below).

Constraint satisfaction problems (CSP) are a fundamental and general problem setting, encompassing a wide range of natural problems, e.g., satisfiability, graph modification, and covering problems. CSPs are posed as restrictions, called \emph{constraints}, on the feasible assignments to a set of variables. The constraints are applications of relations from a given constraint language~$\Gamma$ to tuples of variables.
The complexity of deciding feasibility of a CSP or finding an assignment that optimizes a certain goal varies according to the constraint language. E.g., consider \textsc{Clique} as a Max Ones SAT($\{\neg x\vee\neg y\}$) problem, which is hard to approximate and also~W[1]-complete when parameterized by the size of the clique.
Khanna et al.~\cite{KhannaSTW00} classified Boolean CSPs according to their approximability, for the questions of optimizing either the weight of a solution (Min/Max Ones SAT problems) or the number of satisfied or unsatisfied constraints (Min/Max SAT problems).
We study the kernelization properties of \MOS{\Gamma}, parameterized by the number of true variables, and classify these problems into admitting or not admitting a polynomial kernelization. We point out that Max SAT($\Gamma$), as a subset of Max SNP~(cf.~\cite{KhannaSTW00}), admits polynomial kernelizations independent of~$\Gamma$~\cite{K09}.

\vskip4pt
\noindent\textbf{Related work}\hskip8pt In the literature there exists an impressive list of problems that admit polynomial kernels (in fact often linear or quadratic); giving stronger and stronger kernels has become its own field of interest. We name only a few results for problems that also have a notion of arity:~$O(k^{d-1})$ universe size for \textsc{Hitting Set} with set size at most~$d$~\cite{Abu-Khzam07},~$O(k^{d-1})$ vertices for packing~$k$ vertex disjoint copies of a~$d$-vertex graph~\cite{Moser09}, and~$O(k^d)$ respectively~$O(k^{d+1})$ base set size for any problem from MIN F$^+\Pi_1$ or MAX NP with at most~$d$ variables per clause~\cite{K09}.

Let us also mention a few lower bound results that are based on the framework of Bodlaender et al.~\cite{BodlaenderDFH08}.
First of all, Bodlaender et al.~\cite{DBLP:conf/esa/BodlaenderTY09} provided kernelization-preserving reductions, which can be used to extend the applicability of the lower bounds.
Using this, Dom et al.~\cite{colorsids} gave polynomial lower bounds for a number of problems, among them \textsc{Steiner Tree} and \textsc{Connected Vertex Cover}. Furthermore they considered problems that have a~$k^{f(d)}$ kernel, where~$k$ is the solution size and~$d$ is a secondary parameter (e.g., maximum set size), and showed that there is no kernel with size polynomial in~$k+d$; for, e.g., \textsc{Hitting Set}, \textsc{Set Cover}, and \textsc{Unique Coverage}.
Fernau et al.~\cite{DBLP:conf/stacs/FernauFLRSV09} showed that \textsc{Leaf Out Branching} does not admit a polynomial kernelization, while \textsc{Rooted Leaf Out Branching} does. They express that this gives a Turing kernelization for \textsc{Leaf Out Branching}, by creating one kernel for each choice of the root.
In~\cite{hfreenokernel} the present authors show that a certain Min Ones CSP problem does not admit a polynomial kernel and employ this bound to show that there are~$\h$-free edge deletion respectively edge editing problems that do not admit a polynomial kernel.

\vskip4pt
\noindent\textbf{Our work}\hskip8pt We give a complete classification of \MOS{\Gamma} problems with respect to admittance of polynomial kernelizations. Apart from the hardness dichotomy due to Khanna et al.~\cite{KhannaSTW00}, we distinguish constraint languages~$\Gamma$ by being \emph{\mergeable} or containing at least one relation that is not \mergeable.
For the first case, we provide a new polynomial kernelization based on \emph{non-zero-closed cores}. For the latter we show that \MOS{\Gamma} is either polynomial-time solvable or does not admit a polynomial kernelization, unless \phcollapse.

\vskip4pt
\noindent\textbf{Structure of the paper}\hskip8pt We introduce some basic notation and the notion of \mergeability in Sections~\ref{section:preliminaries} and~\ref{section:decomposability}. 
Sections~\ref{section:upper} and~\ref{section:lower} then form the main part of this work, i.e., the general polynomial kernelization for \MOS{\Gamma} when all relations of~$\Gamma$ are \mergeable, and the lower bound for constraint languages that contain at least one relation that is not \mergeable. We conclude in Section~\ref{section:conclusion}, with a discussion of implications as well as open problems.

\section{Boolean Constraint Satisfaction Problems}\label{section:preliminaries}

A \emph{constraint} is an application of a relation~$R$ to a tuple of variables~$(x_1, \ldots, x_r)$, requiring that~$R(x_1, \ldots, x_r)$ holds, allowing repeated variables (e.g.,~$R(x,x,y)$).  A \emph{constraint language} is a set~$\Gamma$ of relations;
we shall require throughout that every constraint language~$\Gamma$ is finite, and contains only relations over the boolean domain.
A formula~$\F$ over~$\Gamma$ is a conjunction of constraints using relations~$R \in \Gamma$,
and~$V(\F)$ denotes the set of variables that occur in~$\F$. 
An assignment to the variables of~$\F$ satisfies~$\F$ if every constraint in~$\F$ holds under the assignment. The \emph{weight} of an assignment is the number of variables that it sets to true.
Fixing a finite set~$\Gamma$ with relations over the boolean domain defines a \MOS{\Gamma} problem:
\begin{quotation}
\noindent\textbf{Input:} A formula~$\F$ over a finite constraint language~$\Gamma$; an integer~$k$.

\noindent\textbf{Parameter:} $k$.

\noindent\textbf{Task:} Decide whether there is a satisfying assignment for~$\F$ of weight at most~$k$.
\end{quotation}
As an example, if~$R(x,y)=\{(0,1),(1,0),(1,1)\}$, then \MOS{R} is the well-known problem \textsc{Vertex Cover}.
The approximation properties of such problems have been classified by Khanna et al.~\cite{KhannaSTW00}; in particular, we have the following.
\begin{theorem}[\cite{KhannaSTW00}] \label{th:mosnp}
Let~$\Gamma$ be a finite set of relations over the boolean domain.  If~$\Gamma$ is zero-valid, Horn, or width-2 affine (i.e., implementable by assignments,~$(x=y)$, and~$(x \neq y)$), then \MOS{\Gamma} is in \P; otherwise it is \NP-complete.
\end{theorem}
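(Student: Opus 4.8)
\emph{Proof approach.} The plan is to prove the two directions separately. \NP-membership of \MOS{\Gamma} is immediate, since a satisfying assignment of weight at most~$k$ is a certificate of size polynomial in the instance that can be checked in polynomial time; so it suffices to give polynomial-time algorithms for the three tractable classes and \NP-hardness reductions for every other~$\Gamma$.

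For the tractable direction I would handle the three classes one at a time. If~$\Gamma$ is zero-valid, the all-zeros assignment satisfies every~$\Gamma$-formula and has weight~$0$, so the answer is always ``yes''. If~$\Gamma$ is Horn, I replace each constraint~$R(\bar{x})$ by a fixed Horn~CNF defining~$R$ (one exists since~$\Gamma$ is finite and each relation is Horn), obtaining a Horn~CNF; its set of models is closed under coordinatewise conjunction, so if it is satisfiable it has a unique minimal model, which the standard Horn-satisfiability procedure computes in polynomial time (also reporting unsatisfiability when it occurs). Since this model lies coordinatewise below every model, it has minimum weight, and I simply compare its weight with~$k$. If~$\Gamma$ is width-2 affine, I rewrite~$\F$ as a system of unary assignments and binary equalities and disequalities, build the constraint graph on the variables, and run a union--find pass carrying parities: this either detects infeasibility or fixes each connected component up to a global flip, and for each free component I keep the pattern that makes fewer variables true; summing the resulting weights and comparing with~$k$ solves the instance in polynomial time.

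For the hard direction I would follow the expressibility machinery of Khanna et al., transported to the weighted setting. The central notion is a \emph{weight-faithful} (perfect) implementation of a target relation~$R$ by~$\Gamma$: a~$\Gamma$-formula whose solution set, projected onto a set of primary variables, equals~$R$, and such that every tuple of~$R$ extends to a solution whose auxiliary variables carry weight~$0$ (or a fixed, separately-accounted amount). One verifies that weight-faithful implementations compose and that they transport \NP-hardness of \MOS{\cdot}. Then a case analysis over Post's lattice of co-clones --- in effect, the part of the lattice lying below the co-clone of none of zero-valid, Horn, and width-2 affine relations --- shows that every~$\Gamma$ that is simultaneously not zero-valid, not Horn, and not width-2 affine can weight-faithfully implement at least one ``hard core'': the binary~OR~$\{(0,1),(1,0),(1,1)\}$, whose Min Ones problem is \textsc{Vertex Cover}; a relation such as positive~$1$-in-$3$, $\{(1,0,0),(0,1,0),(0,0,1)\}$, for which already satisfiability is \NP-hard; or an affine relation of width at least~$3$, which encodes a minimum-weight-codeword (equivalently, a hypergraph covering) problem. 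Each of these has \NP-hard Min Ones, so the reduction lemma finishes the proof.

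I expect the last step to be the real obstacle, for two reasons. First, keeping implementations \emph{weight-faithful}: the gadgets used for Schaefer's decision dichotomy routinely force auxiliary variables to true, which would corrupt the weight and hence the parameter, so each implementation must be redesigned or its surplus weight tracked exactly --- this is precisely where the Min Ones classification parts ways with the classical one. Second, the combinatorics of the co-clone lattice: carving ``not zero-valid, not Horn, not width-2 affine'' into the complete case and the finitely many cases that still sit inside a single maximal co-clone, and exhibiting a hard-core implementation in each, is where the bulk of the bookkeeping lies.
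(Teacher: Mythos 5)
The paper does not prove Theorem~\ref{th:mosnp} at all; it is imported verbatim from Khanna et al.~\cite{KhannaSTW00}, so the only meaningful comparison is against that source. Your proposal is a correct reconstruction of its argument in outline --- trivial \NP-membership, the three direct algorithms for the tractable classes (all of which are sound as you state them), and weight-preserving implementations of hard cores for every remaining~$\Gamma$ --- and you rightly flag that the entire substance lies in the deferred case analysis showing that each non-tractable~$\Gamma$ admits such an implementation, which is exactly where Khanna et al.\ spend their effort.
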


SAT($\Gamma$) denotes the problem of deciding whether any satisfying assignment exists; the classical complexity of these problems was classified by Schaefer~\cite{schaefersat}, and the parameterized complexity, for the question of finding a satisfying assignment with \emph{exactly}~$k$ true variables, has been classified by Marx \cite{marxcsp}.  The problem \MOS{\Gamma} is fixed-parameter tractable for every finite~$\Gamma$, by a simple branching algorithm; see~\cite{marxcsp}.

We need to define a number of types of constraints.
Let~$\Gamma$ be a finite set of relations over the boolean domain. 
We say that~$\Gamma$ \emph{implements} a relation~$R$ if~$R$ is the set of satisfying assignments for a formula over~$\Gamma$,
i.e.,~$R(x_1,\ldots,x_r) \equiv \bigwedge_i R_i(x_{i1}, \dots, x_{it})$ where each~$R_i \in \Gamma$ (we do not automatically allow the equality relation unless~$= \in \Gamma$).  
A \emph{positive clause} is a disjunction of (non-negated) variables. A \emph{negative clause} is a disjunction of negated variables.
We say that a constraint is \emph{zero-valid} if a tuple of zeros satisfies it. A constraint is \emph{Horn} if it can be implemented by disjunctions containing at most one unnegated variable each, \emph{dual Horn} if it can be implemented by disjunctions containing at most one negated variable each, and \emph{IHSB-} (Implicative Hitting Set Bounded-) if it can be implemented by assignments, implications, and negative clauses.
These constraint types can also be characterized by closure properties.  For two tuples~$\alpha=(\alpha_1,\dots,\alpha_r)$,~$\beta=(\beta_1,\dots,\beta_r)$, let~$\alpha \land \beta = (\alpha_1 \land \beta_1, \dots, \alpha_r \land \beta_r)$, and likewise for~$\alpha \lor \beta$, and write~$\alpha \leq \beta$ if~$\alpha_i \leq \beta_i$  for every~$1 \leq i \leq r$ (where~$0$ and~$1$ are used for false and true values, respectively). 
We then have that a constraint~$R$ is Horn if and only if it is closed under intersection, i.e., if~$\alpha, \beta \in R$, then $\alpha \land \beta \in R$,
and a constraint is dual Horn if and only if it is closed under disjunction.  
Likewise, a constraint~$R$ is IHSB- if and only if it is closed under an operation~$\alpha \land (\beta \lor \gamma)$ for tuples~$\alpha, \beta, \gamma$ in~$R$.  See~\cite{CreignouV08} for more on this.
A constraint language~$\Gamma$ is zero-valid (one-valid, Horn, dual Horn, IHSB-) if every~$R \in \Gamma$ is.

\section{\MergeAbility}\label{section:decomposability}
\label{sec:polymorphism}
The characterization of the dichotomy of the kernelizability of \MOS{\Gamma} given in this paper centers around a newly introduced property we refer to as \emph{\mergeability.} 
Specifically, we will see that for any finite set~$\Gamma$ of relations over the boolean domain, \MOS{\Gamma} admits a polynomial kernelization if either \MOS{\Gamma} is in \P or every relation~$R \in \Gamma$ is \mergeable; in every other case, \MOS{\Gamma} admits no polynomial kernelization unless \phcollapse (which would imply that the polynomial hierarchy collapses to the third level).  
In this section, we define this property and give some basic results about it.

\begin{definition} \label{def:sunflowerdecomp}
Let~$R$ be a relation on the boolean domain.  Given four (not necessarily distinct) tuples~$\alpha,\beta,\gamma,\delta \in R$, we say that the \emph{\mergeoperation} applies if~$\alpha \land \delta \leq \beta \leq \alpha$ and~$\beta \land \gamma  \leq \delta \leq \gamma$.  
If so, then \emph{applying} the \mergeoperation produces the tuple~$\alpha \land (\beta \lor \gamma)$.
We say that~$R$ is \emph{\mergeable} if for any four tuples~$\alpha,\beta,\gamma,\delta \in R$ for which the \mergeoperation applies, 
we have~$\alpha \land (\beta \lor \gamma) \in R$.  
\end{definition}

We show some basic results about \mergeability.
First, we show an alternate presentation of the property; this perspective will be important in Section~\ref{section:upper}, when sunflowers are introduced. 
\begin{proposition} \label{prop:sunflowerdecomp}
Let~$R$ be a relation of arity~$r$ on the boolean domain.  Partition the positions of~$R$ into two sets,  called the \emph{core} and the \emph{petals}; w.l.o.g. assume that positions~$1$ through~$c$ are the core, and the rest the petals.  Let~$(\alpha_C,\alpha_P)$, where~$\alpha_C$ is a~$c$-ary tuple and~$\alpha_P$ an~$(r-c)$-ary tuple, 
denote the tuple whose first~$c$ positions are given by~$\alpha_C$, and whose subsequent positions are given by~$\alpha_P$.
Consider then the following four tuples.
\begin{eqnarray}
\alpha &=& (\alpha_C, \alpha_P) \nonumber \\
\beta  &=& (\alpha_C, 0) \nonumber \\
\gamma &=& (\gamma_C, \gamma_P) \nonumber \\
\delta &=& (\gamma_C, 0) \nonumber
\end{eqnarray}
If~$\alpha$ through~$\delta$ are in~$R$, then the \mergeoperation applies, giving us
\[
(\alpha_C, \alpha_P \land \gamma_P) \in R.
\]
Furthermore, for any four tuples 
to which the \mergeoperation applies, there is a partitioning of the positions into core and petals such that the tuples can be written in the above form.
\end{proposition}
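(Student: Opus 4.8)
The plan is to prove the two directions of this characterization separately: that the four tuples of the given ``sunflower'' shape always admit the \mergeoperation, and conversely that every application of the \mergeoperation can be put into that shape by a suitable choice of core positions.

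\emph{The displayed tuples realise the \mergeoperation.} Assume $\alpha=(\alpha_C,\alpha_P)$, $\beta=(\alpha_C,0)$, $\gamma=(\gamma_C,\gamma_P)$, $\delta=(\gamma_C,0)$ are all in $R$. I would verify the four inequalities $\alpha\land\delta\le\beta\le\alpha$ and $\beta\land\gamma\le\delta\le\gamma$ by splitting the coordinates into the core $\{1,\dots,c\}$ and the petals. On the petals both $\beta$ and $\delta$ are $0$, so $\beta\le\alpha$ and $\delta\le\gamma$ hold there and $\alpha\land\delta$, $\beta\land\gamma$ are $0$ there; on the core $\beta$ restricts to $\alpha_C$ and $\delta$ restricts to $\gamma_C$, so again $\beta\le\alpha$ and $\delta\le\gamma$, and $\alpha\land\delta=\beta\land\gamma=(\alpha_C\land\gamma_C,0)$, which sits below both $\beta$ and $\delta$. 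Hence the \mergeoperation applies, and computing $\beta\lor\gamma=(\alpha_C\lor\gamma_C,\gamma_P)$ and then $\alpha\land(\beta\lor\gamma)=(\alpha_C,\alpha_P\land\gamma_P)$ identifies its output; when $R$ is \mergeable this tuple lies in $R$. This direction is a routine unfolding of Definition~\ref{def:sunflowerdecomp}.

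\emph{Every application of the \mergeoperation has this shape.} This is the substantive direction. Let $\alpha,\beta,\gamma,\delta\in R$ be arbitrary tuples with $\alpha\land\delta\le\beta\le\alpha$ and $\beta\land\gamma\le\delta\le\gamma$. The crux is the right choice of partition: I would declare position $i$ a \emph{core} position exactly when $\beta_i=1$ or $\delta_i=1$, and a \emph{petal} position when $\beta_i=\delta_i=0$; after permuting coordinates so that the core comes first (this is the ``w.l.o.g.''), the petal parts of $\beta$ and $\delta$ vanish by construction, and it remains to show that $\beta$ agrees with $\alpha$ and that $\delta$ agrees with $\gamma$ on the core. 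For a core position $i$: if $\beta_i=1$ then $\alpha_i=1$ by $\beta\le\alpha$, while if $\beta_i=0$ (hence $\delta_i=1$) then $0=\beta_i\ge(\alpha\land\delta)_i=\alpha_i\land\delta_i=\alpha_i$; either way $\beta_i=\alpha_i$. Symmetrically, $\delta_i=1$ gives $\gamma_i=1$ by $\delta\le\gamma$, and $\delta_i=0$ (hence $\beta_i=1$) gives $0=\delta_i\ge(\beta\land\gamma)_i=\beta_i\land\gamma_i=\gamma_i$; either way $\delta_i=\gamma_i$. Writing $\alpha_C,\gamma_C$ for the common core parts and $\alpha_P,\gamma_P$ for the petal parts of $\alpha,\gamma$ then puts the four tuples into exactly the displayed form.

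\emph{Main obstacle.} There is no genuine difficulty beyond discovering the partition used in the converse direction, namely ``a coordinate is a core coordinate precisely when at least one of $\beta,\delta$ is true there''. Once that is written down, the rest is the two-case check above, each case forced directly by a single link of the two inequality chains defining the \mergeoperation. The only further point I would handle carefully is phrasing the coordinate permutation so that the assumption ``positions $1$ through $c$ form the core'' is honestly without loss of generality (including the degenerate cases where the core is empty or is all of $\{1,\dots,r\}$).
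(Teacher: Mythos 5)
Your proof is correct, and since the paper states this proposition without proof (treating it as routine), your argument supplies exactly the intended verification: the forward direction is a direct unfolding of Definition~\ref{def:sunflowerdecomp}, and your choice of core as the positions where $\beta$ or $\delta$ is true, combined with the two-case check using $\beta\le\alpha$, $\alpha\land\delta\le\beta$, $\delta\le\gamma$, and $\beta\land\gamma\le\delta$, correctly establishes the converse. Your parenthetical remark that the membership $(\alpha_C,\alpha_P\land\gamma_P)\in R$ requires \mergeability is also the right reading of the statement.
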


It is straight-forward that this property is preserved by implementations.

\begin{proposition} \label{prop:preserved}
\Mergeability is preserved by assignment and identification of variables, i.e., 
if~$R$ is \mergeable, then so is any relation produced from~$R$ by these operations.
Further, any relation implementable by \mergeable relations is \mergeable.
\end{proposition}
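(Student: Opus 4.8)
The plan is to handle the three claimed closure properties in increasing order of difficulty: first assignment, then identification of variables, then implementation by \mergeable relations (of which the first two are in fact special cases, but it is cleaner to warm up on them). The guiding principle throughout is that the \mergeoperation is defined purely in terms of the coordinatewise lattice operations~$\land$,~$\lor$ and the order~$\leq$, all of which interact well with the operations of projecting out a coordinate, fixing a coordinate to a constant, and gluing coordinates together.

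For \emph{assignment}, suppose~$R'$ is obtained from~$R$ by fixing position~$i$ to a constant~$b \in \{0,1\}$ and projecting it away. Given four tuples~$\alpha',\beta',\gamma',\delta' \in R'$ to which the \mergeoperation applies, lift them to tuples~$\alpha,\beta,\gamma,\delta \in R$ by reinserting the value~$b$ in position~$i$. Since all four lifted tuples agree in position~$i$, inserting this common coordinate does not disturb any of the inequalities~$\alpha\land\delta\leq\beta\leq\alpha$ or~$\beta\land\gamma\leq\delta\leq\gamma$, nor does it change~$\alpha\land(\beta\lor\gamma)$ in the other positions; hence the \mergeoperation applies to the lifted tuples, so~$\alpha\land(\beta\lor\gamma) \in R$ by \mergeability of~$R$, and projecting position~$i$ back out (it still holds value~$b$) gives~$\alpha'\land(\beta'\lor\gamma') \in R'$. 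For \emph{identification of variables}, the argument is the same: if~$R'$ identifies positions~$i$ and~$j$, lift each tuple of~$R'$ by duplicating the shared coordinate into positions~$i$ and~$j$. Duplicating a coordinate respects~$\land$,~$\lor$ and~$\leq$ coordinatewise, so again the \mergeoperation applies to the lifts, and the conclusion descends back to~$R'$ after re-identifying.

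For the \emph{implementation} statement, let~$R(x_1,\dots,x_r) \equiv \bigwedge_{i} R_i(\mathbf{x}_i)$ where each~$R_i$ is \mergeable (and each~$\mathbf{x}_i$ is a tuple of variables among~$x_1,\dots,x_r$, possibly with repetition). Take four tuples~$\alpha,\beta,\gamma,\delta \in R$ to which the \mergeoperation applies; we must show~$\alpha\land(\beta\lor\gamma) \in R$, i.e., that this tuple satisfies every constraint~$R_i$. Fix one such~$R_i$ on variable tuple~$\mathbf{x}_i$, and let~$\alpha|_i,\beta|_i,\gamma|_i,\delta|_i$ denote the projections of the four tuples onto the coordinates named by~$\mathbf{x}_i$ (with repeats as dictated by~$\mathbf{x}_i$); these four projected tuples lie in~$R_i$. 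The key observation is that projecting onto a (multi)subset of coordinates commutes with~$\land$,~$\lor$ and preserves~$\leq$, so the two chains of inequalities defining the \mergeoperation are inherited by the projections: $\alpha|_i \land \delta|_i \leq \beta|_i \leq \alpha|_i$ and $\beta|_i \land \gamma|_i \leq \delta|_i \leq \gamma|_i$. Hence the \mergeoperation applies to~$\alpha|_i,\beta|_i,\gamma|_i,\delta|_i$ in~$R_i$, and by \mergeability of~$R_i$ we get~$\alpha|_i \land (\beta|_i \lor \gamma|_i) = (\alpha\land(\beta\lor\gamma))|_i \in R_i$, i.e.,~$\alpha\land(\beta\lor\gamma)$ satisfies constraint~$R_i$. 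Since~$i$ was arbitrary,~$\alpha\land(\beta\lor\gamma) \in R$, so~$R$ is \mergeable.

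The only genuinely delicate point — and the one I would be most careful to state explicitly — is that the \mergeoperation's hypotheses and conclusion are both built solely from coordinatewise operations that are \emph{functorial} with respect to coordinate projection/duplication/insertion; once that is isolated as a one-line lemma, all three parts follow uniformly and there is no real obstacle. A minor bookkeeping subtlety is that when an implementation repeats variables within a single constraint~$R_i$, the projection~$\alpha|_i$ may itself have repeated entries, but this causes no trouble since the inequalities and lattice identities hold entrywise regardless of repetition.
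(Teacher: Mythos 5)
Your proof is correct and is exactly the routine verification the paper omits (it only remarks that preservation is ``straight-forward''): since the \mergeoperation's hypotheses and conclusion are coordinatewise lattice expressions, they commute with inserting a constant coordinate, duplicating a coordinate, and projecting onto a constraint's scope, and the paper's notion of implementation involves no auxiliary variables, so your per-constraint projection argument suffices. Nothing further is needed.
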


Next, we show what \mergeability implies for a zero-valid relation.

\begin{lemma} \label{lm:zeroprop}
Any zero-valid relation~$R$ which is \mergeable is also IHSB-, 
and can therefore be implemented using negative clauses and implications.
\end{lemma}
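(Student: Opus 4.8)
The plan is to use the closure-property characterizations of the constraint types stated in Section~\ref{section:preliminaries}: a relation is IHSB- if and only if it is closed under the ternary operation $\alpha \land (\beta \lor \gamma)$. So I need to show that if $R$ is zero-valid and \mergeable, then for any three tuples $\alpha, \beta, \gamma \in R$ we have $\alpha \land (\beta \lor \gamma) \in R$. The idea is to exhibit a fourth tuple $\delta \in R$ so that the four tuples $\alpha, \beta, \gamma, \delta$ (possibly after reordering which ones play which role) satisfy the hypotheses $\alpha \land \delta \le \beta \le \alpha$ and $\beta \land \gamma \le \delta \le \gamma$ of the \mergeoperation, and such that the resulting tuple $\alpha \land (\beta \lor \gamma)$ is the one we want. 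The obvious first attempt is to take $\delta$ to be the all-zeros tuple, which lies in $R$ precisely because $R$ is zero-valid.

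With $\delta = 0$, the condition $\alpha \land \delta \le \beta$ becomes $0 \le \beta$, which always holds, and $\beta \land \gamma \le \delta = 0$ forces $\beta \land \gamma = 0$; also $\delta = 0 \le \gamma$ always holds. So the \mergeoperation as literally stated only applies when $\beta$ and $\gamma$ are disjoint and $\beta \le \alpha$. To get the full IHSB- closure I would instead work with the "core and petals" reformulation from Proposition~\ref{prop:sunflowerdecomp}: given arbitrary $\alpha, \beta, \gamma \in R$, I first replace $\beta$ and $\gamma$ by the tuples $\beta' = \alpha \land \beta$ and $\gamma' = \alpha \land \gamma$. I would like these to still be in $R$; this should follow by applying the \mergeoperation (or Proposition~\ref{prop:sunflowerdecomp}) with the zero tuple in a suitable role — e.g., taking the core to be the support of $\alpha$ and using zero-validity to supply a tuple that is zero off the core. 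Having reduced to the case $\beta, \gamma \le \alpha$, the desired output $\alpha \land (\beta \lor \gamma) = \beta \lor \gamma$ can be obtained: set the core to be the positions where $\beta \land \gamma = 1$, observe that on the core $\beta$ and $\gamma$ agree (both $1$), take $\delta$ to be the tuple agreeing with $\gamma$ on the core and zero on the petals (in $R$ by zero-validity together with a merge, as above), and check that $\alpha, \beta, \gamma, \delta$ match the template of Proposition~\ref{prop:sunflowerdecomp}, so the merge yields $\beta \lor \gamma \in R$.

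The main obstacle is bookkeeping: making sure that each auxiliary tuple I introduce (the various "zero off the core" tuples) is genuinely in $R$, which requires a small merge argument each time rather than being free, and verifying the two chains of inequalities $\alpha \land \delta \le \beta \le \alpha$ and $\beta \land \gamma \le \delta \le \gamma$ hold for the specific choices — in particular checking that the monotone-truncation steps $\beta \mapsto \alpha \land \beta$ really are instances of the \mergeoperation and not something stronger. Once closure under $\alpha \land (\beta \lor \gamma)$ is established, IHSB-ness is immediate from the characterization quoted from~\cite{CreignouV08}, and the final clause of the lemma — implementability by negative clauses and implications — is just the definition of IHSB-.
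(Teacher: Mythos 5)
Your plan is correct and is essentially the paper's own proof: the paper first applies the \mergeoperation to $\alpha,0,\beta,0$ to get closure under intersection, and then applies it to $\alpha$, $\alpha\land\beta$, $\alpha\land\gamma$, $\alpha\land\beta\land\gamma$ to obtain $\alpha\land(\beta\lor\gamma)$ --- exactly the four tuples your reduction arrives at (your $\delta$ is $\beta'\land\gamma'=\alpha\land\beta\land\gamma$). One small repair: with your choice of core (the support of $\beta\land\gamma$) the literal template of Proposition~\ref{prop:sunflowerdecomp} is not matched, since $\beta$ need not vanish on the petals; but the inequalities of Definition~\ref{def:sunflowerdecomp} hold directly for this $\delta$ (and a valid core/petal partition does exist, namely the positions where $\beta'=1$ or $\alpha=0$), so the merge applies and the argument goes through.
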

\begin{proof}
We show that~$\alpha \land (\beta \lor \gamma) \in R$ for all tuples~$\alpha,\beta,\gamma \in R$.
First, for any two tuples~$\alpha,\beta \in R$, we can apply the \mergeoperation to
the tuples~$\alpha$,~$0$,~$\beta$,~$0$, to show that~$\alpha \land \beta \in R$.  
It can then be checked that the operation applies to the tuples~$\alpha$,~$(\alpha \land \beta)$,~$(\alpha \land \gamma)$, and~$(\alpha  \land \beta \land \gamma)$, and that this implies~$\alpha \land (\beta \lor \gamma) \in R$.
As previously mentioned, this shows that~$R$ is IHSB-.
Note that assignments add no expressive power when~$R$ is zero-valid.
\end{proof}

Note that by Proposition~\ref{prop:preserved}, this shows that the only zero-valid relations that can be produced from a \mergeable relation by assigning or identifying variables are IHSB-.
However, this still leaves room for other positive examples; e.g., the constraints~$(x+y+z=1 \textrm{ (mod 2)})$ and~$((x=y) \rightarrow z)$ are both \mergeable.

\section{Kernelization} \label{section:upper}

In this section, we show that \MOS{\Gamma} admits a polynomial kernelization if all relations in~$\Gamma$ are \mergeable.
For the purpose of describing our kernelization we first define a sunflower of tuples, similarly to the original sunflower definition for sets.  We point out that a similar though more restricted definition for sunflowers of tuples was given by Marx~\cite{marxcsp}; accordingly the bounds of our sunflower lemma are considerably smaller.

\begin{definition}Let~$\U$ be a finite set, let~$d\in\N$, and let~$\h\subseteq \U^d$. A \emph{sunflower} (\emph{of tuples}) \emph{with cardinality~$t$ and core~$C\subseteq\{1,\dots,d\}$} in~$\U$ is a subset consisting of~$t$ tuples that have the same element at all positions in~$C$ and, in the remaining positions, no element occurs in more than one tuple. The set of remaining positions~$P=\{1,\dots,d\}\setminus C$ is called the \emph{petals}.
\end{definition}

As an example,~$(x_1, \dots, x_c, y_{11}, \dots, y_{1p})$, $\dots$, $(x_1, \dots, x_c, y_{t1}, \dots, y_{tp})$ is a sunflower of cardinality~$t$ with core~$C=\{1,\dots,c\}$, if all~$y_{ij}$ and~$y_{i'j'}$ are distinct when~$i \neq i'$. Note that, differing from Marx~\cite{marxcsp} variables in the petal positions may also occur in the core.
For sets of tuples~$\h\subseteq \U^d$, we give a variant of Erd\H{o}s' and Rado's Sunflower Lemma~\cite{Erdos1960}. The proof is along the same lines as the original, only requiring an additional factor of~$d!$ for picking the shared core positions. Same as the Sunflower Lemma, this immediately gives a polynomial-time algorithm for finding a sunflower of tuples.

\begin{lemma}\label{lemma:sunflower}
Let~$\U$ be a finite set, let~$d\in\N$, and let~$\h \subseteq \U^d$. If the size of~$\h$ is greater than~$k^d(d!)^2$, then it contains a sunflower of cardinality~$k+1$.
\end{lemma}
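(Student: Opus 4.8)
The plan is to adapt the classical proof of the Erd\H{o}s--Rado Sunflower Lemma, which is an induction on the arity~$d$. First I would handle the base case~$d=1$: here~$\h\subseteq\U^1$ is just a set of at most~$|\U|$ distinct singletons, and a sunflower with empty core is simply a collection of pairwise distinct tuples; so if~$|\h|>k$ we already have~$k+1$ petals, which is certainly implied by the hypothesis~$|\h|>k(1!)^2=k$. For the inductive step, suppose the claim holds for all arities smaller than~$d$ and let~$|\h|>k^d(d!)^2$.

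The key step is a greedy extraction of a maximal ``matching'' of tuples that are pairwise disjoint in \emph{all} $d$ positions. Take a maximal set~$M\subseteq\h$ of tuples that pairwise share no coordinate value in any position. If~$|M|\ge k+1$, this is a sunflower with empty core~$C=\emptyset$ and we are done. Otherwise~$|M|\le k$, and the union of all coordinate values appearing in the tuples of~$M$ has size at most~$dk$. By maximality, every tuple in~$\h$ meets this set of at most~$dk$ values in at least one position. Hence some fixed position~$i\in\{1,\dots,d\}$ and some fixed value~$u\in\U$ are such that at least~$|\h|/(dk)$ tuples of~$\h$ have value~$u$ in position~$i$; call this subfamily~$\h'$. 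Delete position~$i$ from every tuple in~$\h'$ to obtain a family~$\h''\subseteq\U^{d-1}$ with~$|\h''|=|\h'|\ge |\h|/(dk) > k^{d-1}(d!)^2/d = k^{d-1}\,d\,((d-1)!)^2 \ge k^{d-1}((d-1)!)^2$. By the induction hypothesis,~$\h''$ contains a sunflower of cardinality~$k+1$, say with core~$C''\subseteq\{1,\dots,d-1\}$ (reading positions appropriately after the deletion); re-inserting position~$i$ with its common value~$u$ gives a sunflower of cardinality~$k+1$ in~$\h$ with core~$C''\cup\{i\}$.

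The main obstacle — really the only place that needs care — is bookkeeping the constant: the classical bound is~$(k^d)\cdot d!$, and here the extra factor comes from the fact that the position being fixed is not predetermined, so in the pigeonhole step we lose a factor~$d$ rather than~$1$ at each level, giving~$\prod_{j=1}^{d} j = d!$ extra on top of the~$k^d\cdot d!$ one would get if the core positions were fixed in advance. One must check that~$k^d(d!)^2/(dk) = k^{d-1}((d-1)!)^2\cdot d \ge k^{d-1}((d-1)!)^2$, which holds since~$d\ge 1$, so the induction goes through cleanly. Finally, as with the original Sunflower Lemma, this argument is constructive: the maximal disjoint matching~$M$ and the pigeonhole choice of~$(i,u)$ are both computable in polynomial time, so iterating the recursion yields a polynomial-time algorithm that either returns a sunflower of cardinality~$k+1$ or certifies~$|\h|\le k^d(d!)^2$.
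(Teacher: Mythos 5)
Your proof follows essentially the same route as the paper's: induction on~$d$, extraction of a maximal pairwise-disjoint subfamily, a pigeonhole step to fix a (position, value) pair, and recursion on the projected family of arity~$d-1$. One small correction: from the fact that every tuple meets a set of at most~$dk$ values, the pigeonhole only yields a pair~$(i,u)$ shared by at least~$|\h|/(d^2k)$ tuples, not~$|\h|/(dk)$ as you claim (you must pay a factor~$d$ for the value and another factor~$d$ for the position, which is exactly why the bound carries~$(d!)^2$ rather than~$d!$). This is harmless here because, as your own arithmetic shows, you had a spare factor of~$d$: $k^d(d!)^2/(d^2k)=k^{d-1}((d-1)!)^2$ is still exactly what the induction hypothesis requires, so the argument goes through once the constant is fixed.
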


\begin{proof}
If~$d=1$, then a sunflower of size~$k+1$ can be easily found, since any~$k+1$ tuples of arity~$d=1$ form a sunflower with empty core. Now for induction, assume the lemma to be true for all~$d'\leq d-1$.

Let~$X$ contain~$\{x_1,\dots,x_d\}$ for each tuple~$(x_1,\dots,x_d)\in\h$. Select a maximal pairwise disjoint subset~$F\subseteq X$. If~$|F|\geq k+1$ then its elements correspond to~$k+1$ tuples that share no variable, i.e., a sunflower with empty core. Otherwise, if~$|F|\leq k$, then all other sets of~$X$ have a non-empty intersection with some element of~$F$. Since the sets correspond to the tuples of~$\h$ there must be an element of some set in~$F$, say~$x$, that occurs in at least~$|\h|/kd$ tuples of~$\h$, as there are at most~$kd$ such elements. Therefore, there must be a position,~$p\in\{1,\dots,d\}$, such that~$x$ occurs in position~$p$ of at least~$|\h|/kd^2 > k^{d-1}((d-1)!)^2$ tuples of~$\h$. 

Define~$\h'$ by~$\h'=\{(x_1,\dots,x_{p-1},x_{p+1},\dots,x_d)\mid (x_1,\dots,x_{p-1},x,x_{p+1},\dots,x_d)\in \h\}$. Observe that~$\h'\subseteq U^{d-1}$ and~$|\h'|>k^{d-1}((d-1)!)^2$, implying that a sunflower of cardinality~$k+1$ in~$\h'$ can be found in~$\h'$; immediately giving a sunflower in~$\h$.
\end{proof}

Our kernelization requires also the notion of a zero-closed position and the related zero-closure of a relation. The following definition introduces these concepts.

\begin{definition}
Let~$R$ be an~$r$-ary relation.
The relation~$R$ is \emph{zero-closed on position~$i$}, if for every tuple~$(t_1, \dots,t_{i-1}, t_i, t_{i+1}, \dots, t_r) \in R$ we have~$(t_1, \dots, t_{i-1}, 0, t_{i+1}, \dots, t_r)\in R$. A relation is \emph{non-zero-closed} if it has no zero-closed positions. We define functions~$\zeroclosure$,~$\nonzeroclosedcore$, and~$\sunflowerrestriction$:
\begin{itemize}
\item $\zeroclosure_P(R)$ is defined to be the \emph{zero-closure of~$R$ on positions~$P\subseteq\{1,\dots,r\}$}, i.e., the smallest superset of~$R$ that is zero-closed on all positions~$i \in P$.
\item $\nonzeroclosedcore(R)$ denotes the \emph{non-zero-closed core}, i.e., the projection of~$R$ onto all positions that are not zero-closed or, equivalently, the relation on the non-zero-closed positions obtained by forcing~$x_i=0$ for all zero-closed positions.
\item $\sunflowerrestriction_C(R)$ denotes the \emph{sunflower restriction of~$R$ with core~$C$}: w.l.o.g.\ the relation expressed by~$[\sunflowerrestriction_C(R)](x_1, \dots, x_r) = R(x_1, \dots, x_r) \land R(x_1, \dots, x_c, 0, \dots, 0)$ for~$C=\{1,\dots,c\}$. The corresponding \emph{core relation} is the~$|C|$-ary relation given by~$R(x_1, \dots, x_c, 0, \dots, 0)$.
\end{itemize}
The mappings~$\zeroclosure_P$ and~$\sunflowerrestriction_C$ extend also to constraints:~$\sunflowerrestriction_C(R(x_1,\dots,x_r))=[\sunflowerrestriction_C(R)](x_1,\dots,x_r)$. Similarly for~$\nonzeroclosedcore$, but variables in zero-closed positions are removed, e.g., when~$i$ is the only zero-closed position of~$R$ then~$\nonzeroclosedcore(R(x_1,\dots,x_r))=[\nonzeroclosedcore(R)](x_1,\dots,x_{i-1},x_{i+1},\dots,x_r)$.
\end{definition}

\begin{lemma} \label{lemma:sunflimpl}
Let~$R$ be a \mergeable relation and let~$C\cup P$ be a partition of its positions into core and petals. There is an implementation of~$\sunflowerrestriction_C(R)$ using~$\zeroclosure_P(\sunflowerrestriction_C(R))$ and implications.
\end{lemma}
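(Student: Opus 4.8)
The plan is to exhibit, explicitly, an implementation of $\sunflowerrestriction_C(R)$ whose constraints are (i) a single copy of $\zeroclosure_P(\sunflowerrestriction_C(R))$ applied to the same variables $x_1,\dots,x_r$, together with (ii) a small set of implications among the $x_i$. The intuition is that $\zeroclosure_P(\sunflowerrestriction_C(R))$ is a relaxation of $\sunflowerrestriction_C(R)$ obtained by allowing petal positions to be ``switched off'' to $0$; the implications must force that whenever a petal position is set to $1$ the whole tuple is already in $R$, i.e.\ no switching-off was needed. Concretely I would first analyse, using \mergeability, the structure of $\zeroclosure_P(\sunflowerrestriction_C(R))$: a tuple $t=(t_C,t_P)$ lies in this zero-closure exactly when there is a tuple $t'=(t_C,t_P')$ in $\sunflowerrestriction_C(R)$ with $t_P \le t_P'$ on the petals (and $t'$ agrees with $t$ on the core). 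I would then argue, via Proposition~\ref{prop:sunflowerdecomp} and the \mergeoperation, that among all such ``witnesses'' $t'$ there is a canonical minimal one, and that the discrepancy between $t$ and that minimal witness is controlled by the petal positions that are forced to $0$.

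The key structural step I expect to need is this: if $(t_C,t_P) \in \zeroclosure_P(\sunflowerrestriction_C(R))$, then we want to say it is in $\sunflowerrestriction_C(R)$ itself precisely when the pattern of $1$'s on the petals is ``consistent'' in the sense that for each petal position $i$ with $t_i = 1$, a certain implication ``$x_i \rightarrow x_j$'' (or a conjunction of such) is satisfied. The source of these implications is \mergeability: given two witnesses in $\sunflowerrestriction_C(R)$ sharing the core value $t_C$, say $\alpha = (t_C,\alpha_P)$ and $\gamma=(t_C,\gamma_P)$, the \mergeoperation (in the form of Proposition~\ref{prop:sunflowerdecomp}, with $\beta=(t_C,0)$, $\delta=(t_C,0)$, both of which are in $\sunflowerrestriction_C(R)$ by definition of the sunflower restriction, since $R(x_1,\dots,x_c,0,\dots,0)$ is forced there) yields $(t_C,\alpha_P \land \gamma_P)\in R$, hence in $\sunflowerrestriction_C(R)$. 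Thus for a fixed core value $t_C$ the set of admissible petal patterns is closed under intersection, so it has a unique minimal witness above any given $t_P$, and the condition ``$t\in\sunflowerrestriction_C(R)$'' becomes ``$t_P$ equals that minimal witness restricted to the support of $t_P$'', which is exactly an implicative (Horn-type, in fact IHSB-like) condition on the petal bits — expressible by implications $x_i \rightarrow x_j$ between petal positions (and from petals to core positions). I would collect exactly those implications into the implementation.

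The verification then splits into the two containments. For soundness, any assignment satisfying $\zeroclosure_P(\sunflowerrestriction_C(R))(x_1,\dots,x_r)$ together with all the chosen implications must, by the closure-under-intersection argument above, actually lie in $\sunflowerrestriction_C(R)$: the implications certify that the petal $1$-pattern already matches a genuine witness rather than a strictly switched-down relaxation. For completeness, any tuple in $\sunflowerrestriction_C(R)$ trivially satisfies $\zeroclosure_P(\sunflowerrestriction_C(R))$ (it is a superset) and satisfies every implication we added, since those implications were derived as logical consequences holding on all of $\sunflowerrestriction_C(R)$ (they are valid implications of the core-plus-petals structure, by the minimality argument).

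The main obstacle I anticipate is making the ``canonical minimal witness'' argument fully precise and checking that the resulting condition is genuinely a conjunction of \emph{binary} implications rather than something needing longer Horn clauses: one has to verify that zero-validity-style behaviour on the petals, forced by \mergeability (and the always-available tuple $(t_C,0)$), is strong enough that the gap between $t$ and its minimal witness is witnessed positionwise, i.e.\ that $t_i=1$ implies $t_j=1$ for a \emph{single} well-chosen $j=j(i)$ (or a fixed finite conjunction), uniformly in the core value. This is where Lemma~\ref{lm:zeroprop}-type reasoning is relevant — restricted to the petals, the relevant projection behaves like an IHSB- relation — and I would lean on the characterisation ``closed under $\alpha\land(\beta\lor\gamma)$'' to pin down the implicational structure. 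Once that is in hand, the implementation is just: one constraint $\zeroclosure_P(\sunflowerrestriction_C(R))(x_1,\dots,x_r)$ plus the finite list of implications, and both directions of correctness follow as sketched.
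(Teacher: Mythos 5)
Your overall architecture matches the paper's: the implementation is the single constraint $\zeroclosure_P(\sunflowerrestriction_C(R))$ applied to the original variables plus a set of implications, and the analysis proceeds by fixing a core assignment and observing that the induced relation on the petals is zero-valid (because $(t_C,0)\in\sunflowerrestriction_C(R)$ whenever some $(t_C,t_P)$ is) and \mergeable, hence IHSB- by Lemma~\ref{lm:zeroprop}; since negative clauses survive zeroing-out petal positions, only implications can separate a spurious tuple of the zero-closure from a genuine one. Up to that point you are on track, and the ``minimal witness'' picture is a reasonable reformulation.

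The gap is exactly the issue you flag as needing to hold ``uniformly in the core value,'' and it is not a deferrable technicality: it is the crux of the lemma, and the tools you have assembled do not resolve it. Every application of the \mergeoperation in your sketch uses tuples sharing the \emph{same} core value (taking $\beta=\delta=(t_C,0)$), which only yields intersection-closure of each fixed-core petal slice, i.e.\ a statement about one slice at a time. But the implications added to the implementation are global: an implication $(y_i\rightarrow y_j)$ that correctly carves the slice at core value $\alpha_C$ may be violated by a legitimate tuple with a different core value $\gamma_C$, in which case adding it destroys completeness; conversely, if you restrict to globally valid implications, you have not shown that every spurious tuple violates one of them, so soundness is open. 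The paper closes this with a cross-core application of the \mergeoperation in the form of Prop.~\ref{prop:sunflowerdecomp}: if $(y_i\rightarrow y_j)$ is entailed by the slice at $\alpha_C$ (witnessed by $\alpha=(\alpha_C,\alpha_P)$ above the spurious $\sigma$) but fails for some $\gamma=(\gamma_C,\gamma_P)\in\sunflowerrestriction_C(R)$, then merging $\alpha$, $(\alpha_C,0)$, $\gamma$, $(\gamma_C,0)$ produces $(\alpha_C,\alpha_P\land\gamma_P)\in\sunflowerrestriction_C(R)$, a tuple in the $\alpha_C$-slice with $y_i=1$, $y_j=0$ --- a contradiction. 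Hence every implication needed to exclude some spurious tuple is automatically valid on all of $\sunflowerrestriction_C(R)$, and adding all globally valid petal--petal implications completes the implementation. This step, where two distinct core assignments enter a single \mergeoperation, is where the full strength of \mergeability (beyond per-slice intersection-closure) is genuinely used, and it is missing from your proposal.
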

\begin{proof}
By Prop.~\ref{prop:preserved},~$\sunflowerrestriction_C(R)$ must be \mergeable.
Further, assigning any set of values to the variables in the core produces a zero-valid relation on the petals, which is still \mergeable.
Thus by Lemma~\ref{lm:zeroprop}, this relation on the petals has an implementation using negative clauses and implications.
For any tuple~$\sigma$ in~$\sunflowerrestriction_C(R)$ or~$\zeroclosure_P(\sunflowerrestriction_C(R))$, let its \emph{core assignment} be the values it assigns to the positions in~$C$. By definition, we have~$\sunflowerrestriction_C(R)\subseteq\zeroclosure_P(\sunflowerrestriction_C(R))$.

Consider now a tuple~$\sigma \in \zeroclosure_P(\sunflowerrestriction_C(R))\setminus \sunflowerrestriction_C(R)$.
Assume that~$\sigma$ makes core assignment~$\alpha_C$; thus there is a matching~$\alpha \in \sunflowerrestriction_C(R)$,~$\alpha>\sigma$, with an identical core assignment.  As in Prop.~\ref{prop:sunflowerdecomp}, write~$\alpha=(\alpha_C,\alpha_P)$, and 
consider the constraint on the petals that is formed by core assignment~$\alpha_C$.
We see that this constraint must entail some implication~$(y_i \rightarrow y_j)$, where~$\sigma$ assigns~$y_i=1,y_j=0$ while~$\alpha_P$ assigns~$y_i=y_j=1$.
Further, since~$\sunflowerrestriction_C(R)$ is a sunflower restriction, we have~$\beta=(\alpha_C,0) \in \sunflowerrestriction_C(R)$.
Now if~$(y_i \rightarrow y_j)$ does not hold in~$\sunflowerrestriction_C(R)$ in general, then let~$\gamma=(\gamma_C,\gamma_P) \in \sunflowerrestriction_C(R)$ be a tuple which assigns~$y_i=1$,~$y_j=0$,
and let~$\delta=(\gamma_C,0) \in \sunflowerrestriction_C(R)$.  
By Prop.~\ref{prop:sunflowerdecomp}, we can now apply the \mergeoperation to tuples~$\alpha$ through~$\delta$, showing~$(\alpha_C, \alpha_P \land \gamma_P) \in \sunflowerrestriction_C(R)$.  But this is a tuple with core assignment~$\alpha_C$ which assigns~$y_i=1$,~$y_j=0$, which is a contradiction. 
Thus~$(y_i \rightarrow y_j)$ holds in~$\sunflowerrestriction_C(R)$ regardless of core assignment, and the constraint~$(y_i \rightarrow y_j)$ can be added to our implementation of~$\sunflowerrestriction_C(R)$, removing the tuple~$\sigma$.

Adding all implications between petals which hold in~$\sunflowerrestriction_C(R)$ removes from our implementation all tuples which are in~$\zeroclosure_P(\sunflowerrestriction_C(R))$ but not in~$\sunflowerrestriction_C(R)$, so that the conjunction of~$\zeroclosure_P(\sunflowerrestriction_C(R))$ with all valid implications is an implementation of~$\sunflowerrestriction_C(R)$.
\end{proof}

The following technical lemma proves that the relation~$\zeroclosure_P(\sunflowerrestriction_C(R))$, required by Lemma~\ref{lemma:sunflimpl}, is \mergeable.

\begin{lemma} \label{lemma:zeroclosure}
Let~$R$ be a \mergeable relation and let~$C\cup P$ be a partition of its positions into core and petals. Then~$\zeroclosure_P(\sunflowerrestriction_C(R))$ is \mergeable.
\end{lemma}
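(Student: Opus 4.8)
The plan is to work directly from Definition~\ref{def:sunflowerdecomp}: take four tuples $\alpha,\beta,\gamma,\delta \in \zeroclosure_P(\sunflowerrestriction_C(R))$ to which the \mergeoperation applies (so $\alpha \land \delta \leq \beta \leq \alpha$ and $\beta \land \gamma \leq \delta \leq \gamma$), and show that $\alpha \land (\beta \lor \gamma)$ lies in $\zeroclosure_P(\sunflowerrestriction_C(R))$ as well. The natural device is the characterisation from Proposition~\ref{prop:sunflowerdecomp}: after choosing the appropriate core/petal split \emph{for these four tuples}, we may assume $\alpha = (\alpha_C,\alpha_P)$, $\beta = (\alpha_C,0)$, $\gamma = (\gamma_C,\gamma_P)$, $\delta = (\gamma_C,0)$, and the target tuple is $(\alpha_C, \alpha_P \land \gamma_P)$. (One should note this internal split of the $r$ positions is unrelated to the fixed $C\cup P$ defining the sunflower restriction.) Since $\sunflowerrestriction_C(R) \subseteq \zeroclosure_P(\sunflowerrestriction_C(R))$ and the latter is obtained by freely zeroing out the petal positions $P$, each of $\alpha,\beta,\gamma,\delta$ arises from some witness tuple in $\sunflowerrestriction_C(R)$ that agrees with it off $P$ and may be larger on $P$.

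The key steps, in order, are: (i) fix the Proposition~\ref{prop:sunflowerdecomp} representation of the four tuples; (ii) lift $\alpha$ and $\gamma$ to witnesses $\hat\alpha, \hat\gamma \in \sunflowerrestriction_C(R)$ that agree with them outside $P$ and dominate them on $P$ — and, crucially, use that $\sunflowerrestriction_C(R)$ is itself zero-closed on $P$ in the weak sense needed, or rather that the \emph{core relation} is genuine, to also lift $\beta$ and $\delta$ to $\hat\beta = (\alpha_C', 0)$ and $\hat\delta = (\gamma_C', 0)$ with matching core parts; (iii) check that the \mergeoperation applies to $\hat\alpha, \hat\beta, \hat\gamma, \hat\delta$ in $\sunflowerrestriction_C(R)$ — this is where one has to be careful that zeroing petals of $\alpha,\gamma$ to recover them from $\hat\alpha,\hat\gamma$ does not destroy the inequalities, so one may instead have to re-partition positions and invoke \mergeability of $\sunflowerrestriction_C(R)$ (Proposition~\ref{prop:preserved}) with a cleverly chosen internal core that absorbs the discrepancy positions into its own "core"; (iv) conclude that $\hat\alpha \land (\hat\beta \lor \hat\gamma) \in \sunflowerrestriction_C(R)$, and then project/zero down on $P$ to land $\alpha \land (\beta \lor \gamma)$ inside $\zeroclosure_P(\sunflowerrestriction_C(R))$, using that $\zeroclosure_P$ is closed under zeroing positions of $P$.

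The main obstacle is step (iii): the witnesses $\hat\alpha,\hat\gamma$ can differ from $\alpha,\gamma$ exactly on petal positions of $P$, and those same positions may be where the merge inequalities $\alpha \land \delta \le \beta$ etc.\ are tight, so naively the lifted tuples need not satisfy the hypotheses of the \mergeoperation. The trick I expect to need is to split the analysis position-by-position over $P$: on a petal position that is already $0$ in $\alpha$ (or where witness and tuple agree) nothing changes, while on a petal position where they differ we argue that moving it into the "core" block of the Proposition~\ref{prop:sunflowerdecomp} form for the lifted tuples still yields a valid application of the \mergeoperation, because $\zeroclosure_P$ only ever freely lowers such coordinates. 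Concretely, one shows the set of positions on which $(\hat\alpha,\hat\beta,\hat\gamma,\hat\delta)$ fails to match the required pattern can be folded into the internal core $\alpha_C$/$\gamma_C$, after which \mergeability of $\sunflowerrestriction_C(R)$ gives the result and a final application of the zero-closure definition on $P$ finishes the proof.
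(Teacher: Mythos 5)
Your overall skeleton --- lift the counterexample tuples to witnesses in $\sunflowerrestriction_C(R)$, do the closure work there, and push the result back down through the zero-closure on $P$ --- matches the paper's proof, and you correctly locate the difficulty: the lifts may disagree with the original tuples exactly on positions of $P$ where the inequalities of the \mergeoperation are tight. However, your proposed repair for step (iii) does not work. The applicability of the \mergeoperation is a condition on the four tuples themselves ($\alpha \land \delta \leq \beta \leq \alpha$ and $\beta \land \gamma \leq \delta \leq \gamma$); Proposition~\ref{prop:sunflowerdecomp} only says that \emph{when} it applies, a compatible core/petal split exists. The lifts $\hat\alpha$ and $\hat\beta$ are obtained independently and are only constrained to dominate $\alpha$ resp.\ $\beta$ on $P$, so $\hat\beta$ may be $1$ at a position of $P$ where $\hat\alpha$ is $0$; then $\hat\beta \not\leq \hat\alpha$ and no re-partitioning of positions can make the \mergeoperation apply to the four lifted tuples. ``Absorbing the discrepancy positions into the internal core'' does not change the inequalities, so it cannot rescue this.

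The paper's proof avoids this by discarding the petal information of the lifts of $\beta$ and $\delta$ entirely: writing the positions in blocks $(W,X,Y,Z)=(C'\cap C,\; C'\cap P,\; P'\cap C,\; P'\cap P)$, it uses that $\sunflowerrestriction_C(R)$ is a sunflower restriction to obtain $(W_1,0,0,0)$ and $(W_2,0,0,0)$ in $\sunflowerrestriction_C(R)$, and performs the \mergeoperation on $\hat\alpha$, $(W_1,0,0,0)$, $\hat\gamma$, $(W_2,0,0,0)$ with internal core $W$ only. This always applies, but it creates a second problem your plan does not address: the merged tuple carries $X_{1a}\land X_{2a}$ on the $X$-block, and since $X_{2a}$ dominates only $X_2$ (not $X_1$), this need not dominate $X_1$ --- yet to conclude that the excluded tuple $(W_1,X_1,Y_1\land Y_2,Z_1\land Z_2)$ lies in the zero-closure you need a tuple of $\sunflowerrestriction_C(R)$ agreeing with it on $W\cup Y$ and dominating it on $X\cup Z$. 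The paper repairs the $X$-block by a further application of the operation $\alpha\land(\beta\lor\gamma)$, which is available on the zero-valid slice $\{W=W_1\}$ by Lemma~\ref{lm:zeroprop}, yielding $X_{1a}\land(X_{1b}\lor X_{2a})\geq X_1$. This second closure step is the heart of the argument and is missing from your proposal; without it (or an equivalent device) the proof does not go through.
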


\begin{proof}
Recall that~$\zeroclosure_P(\sunflowerrestriction_C(R))$ is the zero-closure on the petal positions of the sunflower restriction of~$R$ with core~$C$.
Let~$R':=\zeroclosure_P(\sunflowerrestriction_C(R))$; assume by way of contradiction that~$R'$ is not \mergeable. Then there are four tuples in~$R'$ such that applying the \mergeoperation
on the tuples creates a tuple not in~$R'$.  Let~$C' \cup P'$ be the partition of the positions of~$R'$ into core and petals that is used in this counterexample.
Grouping the positions of~$R'$ in four groups, written in the order~$(C' \cap C, C' \cap P, P' \cap C, P' \cap P)$,
naming the groups~$W$ through~$Z$, the counterexample can be written as follows.
\begin{eqnarray}
(W_1,X_1,Y_1,Z_1) &\in& R' \label{eqnl1}\\
(W_1,X_1,0,0)     &\in& R' \label{eqnl2}\\
(W_2,X_2,Y_2,Z_2) &\in& R' \label{eqnl3}\\
(W_2,X_2,0,0)     &\in& R' \label{eqnl4}\\
(W_1,X_1, Y_1 \land Y_2, Z_1 \land Z_2) &\notin& R' \label{eqnlconc}
\end{eqnarray}
We will derive a contradiction.  First, we note that for each equation~(\ref{eqnl1})--(\ref{eqnl4}), there is a corresponding tuple in~$\sunflowerrestriction_C(R)$.
\begin{eqnarray}
(W_1,X_{1a},Y_1,Z_{1a}) &\in& \sunflowerrestriction_C(R) \label{eqnlp1}\\
(W_1,X_{1b},0,Z_b)     &\in& \sunflowerrestriction_C(R) \label{eqnlp2}\\
(W_2,X_{2a},Y_2,Z_{2a}) &\in& \sunflowerrestriction_C(R) \label{eqnlp3}\\
(W_2,X_{2b},0,Z_c)     &\in& \sunflowerrestriction_C(R) \label{eqnlp4}
\end{eqnarray}
Here,~$X_{1a}$ and $X_{1b}$ are supersets of~$X_1$, and likewise for~$X_2$,~$Z_1$, and~$Z_2$.~$Z_b$ and~$Z_c$ are arbitrary.
Using that~$\sunflowerrestriction_C(R)$ is a sunflower restriction and \mergeable, we can conclude the following.
\begin{eqnarray}
(W_1,0,0,0)     &\in& \sunflowerrestriction_C(R) \label{eqnl5}\\
(W_2,0,0,0)     &\in& \sunflowerrestriction_C(R) \label{eqnl6}\\
(W_1,X_{1a} \land X_{2a}, Y_1 \land Y_2, Z_{1a} \land Z_{2a}) &\in& \sunflowerrestriction_C(R) \label{eqnl7} 
\end{eqnarray}
The first two come from~(\ref{eqnlp2}) and~(\ref{eqnlp4}); the third is produced by a \mergeoperation on~(\ref{eqnlp1}) and~(\ref{eqnlp3}) using these two.
Now, the tuples which match~$W_1$ on the $W$-variables form a zero-valid relation.  By Lemma~\ref{lm:zeroprop}, this relation is closed under an operation~$(\alpha \land (\beta \lor \gamma))$.
Applying this on the tuples of equations~(\ref{eqnlp1}),~(\ref{eqnlp2}), and~(\ref{eqnl7}) gives us the following conclusion.
\begin{equation}
(W_1, X_{1a} \land (X_{1b} \lor X_{2a}), Y_1 \land Y_2, Z_{1a} \land (Z_b \lor Z_{2a})) \in \sunflowerrestriction_C(R) \label{eqnl9}
\end{equation}
In particular, this tuple matches~(\ref{eqnlconc}) on the $W$- and $Y$-variables, and is a superset of it on the $X$- and $Z$-variables.
Since~$R'$ is the zero-closure of~$\sunflowerrestriction_C(R)$ on the~$X$- and~$Z$-variables, we have a contradiction. 
\end{proof}

Lemmas~\ref{lemma:sunflimpl} and~\ref{lemma:zeroclosure} are the foundation for a sunflower-based kernelization for \MOS{\Gamma}. They show that the sunflower restriction~$\sunflowerrestriction_C(R)$ of some \mergeable~$R$-constraint can be implemented using its \mergeable zero-closure on the petal positions as well as implications. However,~$\zeroclosure_P(\sunflowerrestriction_C(R))$ is not necessarily contained in~$\Gamma$ and such a replacement does not give any immediate reduction of the instance. Indeed, the arity of~$\zeroclosure_P(\sunflowerrestriction_C(R))$ is the same as that of~$R$.

We address this problem by introducing a new measure of difficulty for formulas, namely the sum of non-zero-closed cores, based on the following definition.

\begin{definition}
Let~$\F$ be a formula and let~$R$ be a relation. We define~$\FOO{\F,R}$ as the set of all tuples~$(x_1,\dots,x_t)$ where~$[\nonzeroclosedcore(R)](x_1,\dots,x_t)$ is the non-zero-closed core of an~$R$-constraint in~$\F$.
\end{definition}

For some relations, sunflower restriction, zero-closure, and the relation itself are the same, for certain selections of core and petal positions. See for example the following \mergeable relation:
\begin{align}
R=&\{(0,0,1,0),(0,1,0,0),(0,1,0,1),(1,0,0,0),(1,0,0,1),(1,1,1,0),(1,1,1,1)\}\nonumber \\
=&\sunflowerrestriction_{\{1,2,3\}}(R) = \zeroclosure_{\{4\}}(\sunflowerrestriction_{\{1,2,3\}}(R)) \nonumber
\end{align}
This is also one of the smallest examples, where a sunflower restriction cannot be expressed using the core relation (i.e., all tuples for the core such that the petal variables can take value~$0$), implications, and negative clauses. Here, the core relation is~$\{(0,0,1),(0,1,0),(1,0,0),(1,1,1)\}$, but no implication or negative clause can exclude the tuple~$(0,0,1,1)$ without also excluding other tuples that do occur in the sunflower restriction.
Thus there are \mergeable relations for which a sunflower-based reduction using Lemma~\ref{lemma:sunflimpl} does not lead to any simplification, even in terms of~$\FOO{\F,R}$. We overcome this difficulty by searching for sunflowers among the tuples of~$\FOO{\F,R}$. Those are leveraged into a replacement of the~$R$-constraints that contributed these tuples. The following theorem shows this approach in detail.

\begin{theorem}\label{theorem:equivalentformula}Let~$\Gamma$ be a \mergeable constraint language with maximum arity~$d$. Let~$\F$ be a formula over~$\Gamma$ and let~$k$ be an integer. In polynomial time one can compute a formula~$\F'$ over a \mergeable constraint language~$\Gamma'\supseteq\Gamma$ with maximum arity~$d$, such that every assignment of weight at most~$k$ satisfies~$\F$ if and only if it satisfies~$\F'$ and, furthermore,~$|\FOO{\F',R}|\in O(k^d)$ for every non-zero-valid relation that occurs in~$\F'$.
\end{theorem}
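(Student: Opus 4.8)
The plan is to iterate a sunflower-reduction over the non-zero-valid relations of~$\Gamma$, using $\FOO{\F,R}$ as the quantity to be bounded and the implementation lemma (Lemma~\ref{lemma:sunflimpl}) as the tool for replacing a sunflower of constraints. Fix a non-zero-valid $R\in\Gamma$ of arity~$r\le d$ and suppose $|\FOO{\F,R}|>k^r(r!)^2$. By Lemma~\ref{lemma:sunflower}, the set $\FOO{\F,R}$ of non-zero-closed-core tuples contains a sunflower of cardinality~$k+1$, with some core positions~$C$ and petals~$P$ (relative to the positions of $\nonzeroclosedcore(R)$). Each of these $k+1$ tuples comes from an $R$-constraint in~$\F$; let those constraints be $R(\bar x^{(1)}),\dots,R(\bar x^{(k+1)})$, and let the (non-core) petal positions carry pairwise disjoint variables across the $k+1$ constraints, while all share the same core variables — this is exactly the sunflower structure transported back to the constraints. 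First I would argue that if a weight-$\le k$ assignment sets any petal variable of some constraint $R(\bar x^{(i)})$ to~$1$, then — because the cores are non-zero-closed — satisfying that constraint forces at least one further variable (a ``witness'') to be~$1$ in each of the $k+1$ petals; since the petals are variable-disjoint this would already cost $k+1$ ones, a contradiction. Hence any solution of weight $\le k$ must set all petal variables (in the non-zero-closed core sense) of all $k+1$ constraints to~$0$; equivalently each such constraint may be replaced by its sunflower restriction $\sunflowerrestriction_C(R)$, since the two agree on all assignments that zero out the petals, and no weight-$\le k$ assignment violates that. This is the step that uses the specific definition of $\FOO{\cdot}$ via the non-zero-closed core rather than the raw arity.

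Having replaced each of the $k+1$ constraints by $\sunflowerrestriction_C(R)$-constraints, I would then invoke Lemma~\ref{lemma:sunflimpl}: $\sunflowerrestriction_C(R)$ is implemented by $\zeroclosure_P(\sunflowerrestriction_C(R))$ together with implications between petal variables. By Lemma~\ref{lemma:zeroclosure}, $R'':=\zeroclosure_P(\sunflowerrestriction_C(R))$ is \mergeable, and by Proposition~\ref{prop:preserved} so are the implications and therefore so is the whole implementation; we add $R''$ (and, if needed, the binary implication relation, which is \mergeable) to $\Gamma$ to obtain $\Gamma'\supseteq\Gamma$ of arity at most~$d$. The crucial gain is in the $\FOO{\cdot}$ measure: since $R''$ is zero-closed on all petal positions, the petal positions are zero-closed in $R''$, so the non-zero-closed core of an $R''$-constraint involves only the core variables $C$ — and all $k+1$ replacement constraints share the same core variables, so they contribute a single tuple to $\FOO{\F'',R''}$ instead of $k+1$ distinct tuples to $\FOO{\F,R}$. (The added implications are zero-valid, hence excluded from the count of non-zero-valid relations in the statement.) Thus each sunflower-replacement strictly decreases $\sum_{R\ \text{non-zero-valid}}|\FOO{\F,R}|$, so the process terminates in polynomially many steps; when it halts, $|\FOO{\F',R}|\le k^d(d!)^2\in O(k^d)$ for every non-zero-valid $R$ occurring in~$\F'$. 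Equivalence on weight-$\le k$ assignments is preserved at each step by construction, hence overall.

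I expect the main obstacle to be the first step — showing that $|\FOO{\F,R}|$ large genuinely forces petal variables to be~$0$ under any weight-$\le k$ solution, and that replacing the constraints by $\sunflowerrestriction_C(R)$ is sound. The subtlety is that ``non-zero-closed core'' is not literally ``every variable outside the core is forced'', but rather that on the non-zero-closed positions no single coordinate can be flipped to~$0$ freely; one has to check carefully that, given the tuple the constraint currently satisfies and the fact that it lies in a non-zero-closed core position set, flipping a petal variable to~$1$ indeed cannot be absorbed without some other variable of that petal becoming~$1$. One also has to be careful that the sunflower core $C$ lives in the coordinates of $\nonzeroclosedcore(R)$, so translating back to $R$-constraints one must keep the (already zero-forced) genuinely-zero-closed coordinates of $R$ fixed to~$0$ throughout; these coordinates cause no trouble since they are the same pattern in every constraint, but they must be tracked in the bookkeeping. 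A secondary, purely technical point is to confirm that $\Gamma'$ stays finite: only finitely many distinct relations $\zeroclosure_P(\sunflowerrestriction_C(R))$ arise as $R$ ranges over the finite set $\Gamma$ and $(C,P)$ over the finitely many partitions of its positions, so $\Gamma'$ is finite and the reduction is genuinely a polynomial-time algorithm over a fixed finite language. The remaining calculations — polynomial running-time accounting, the exact termination count — are routine.
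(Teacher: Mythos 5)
The overall architecture of your proposal --- iterate sunflower replacements driven by $\FOO{\F',R}$, replace the matched constraints by implementations of their sunflower restrictions via Lemma~\ref{lemma:sunflimpl}, add the (\mergeable, by Lemma~\ref{lemma:zeroclosure}) relations $\zeroclosure_{P\cup Z}(\sunflowerrestriction_C(R))$ to the language, and observe that all replacement constraints contribute one and the same tuple (on the shared core variables) to the measure, which therefore strictly decreases --- is exactly the paper's. But the step you yourself flag as the main obstacle contains a genuine error. You claim that any satisfying assignment of weight at most~$k$ must set \emph{all} petal variables of \emph{all} $k+1$ sunflower constraints to~$0$, deriving this from the assertion that setting one petal variable to~$1$ forces a true ``witness'' in each of the $k+1$ petals. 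Neither claim holds. Non-zero-closedness of a position only says that the position cannot always be flipped to~$0$ within~$R$; it does not force additional true variables elsewhere, and nothing prevents a weight-$\le k$ assignment from having true petal variables in up to~$k$ of the constraints. For \textsc{Vertex Cover}, $R(x,y)=(x\vee y)$, the constraints $x\vee y_1,\dots,x\vee y_{k+1}$ form such a sunflower with core position~$1$, and $x=y_1=1$, all else~$0$, is a perfectly good low-weight solution with a true petal variable. So your stated justification for the soundness of the replacement --- ``the two agree on all assignments that zero out the petals'' --- rests on a false premise.

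The correct argument is a pigeonhole on a \emph{single} petal. Since the $k+1$ petal variable sets are pairwise disjoint, a weight-$\le k$ assignment~$\phi$ zeroes out all petal variables of at least one constraint $R(x_1,\dots,x_c,y_{i1},\dots,y_{ip},z_1,\dots,z_t)$ of the sunflower; since the remaining positions~$Z$ are zero-closed, $\phi$ then satisfies the core relation $R(x_1,\dots,x_c,0,\dots,0)$. Because $\sunflowerrestriction_C$ of \emph{every} constraint in the sunflower is, by definition, that constraint conjoined with this one core relation on the shared core variables, $\phi$ satisfies all of the sunflower restrictions simultaneously --- even though most of the petals may well contain true variables. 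Since the sunflower restriction only strengthens each constraint, this is exactly what is needed for equivalence on weight-$\le k$ assignments. With this substitution your argument goes through; the rest of your bookkeeping (finiteness of~$\Gamma'$, preservation of \mergeability, handling of the already-zero-closed coordinates of~$R$, and termination in polynomially many steps) is correct and matches the paper.
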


\begin{proof} We begin constructing~$\F'$, starting from~$\F'=\F$. While~$|\FOO{\F',R}|>k^d(d!)^2$ for any non-zero-valid relation~$R$ in~$\F'$, search for a sunflower of cardinality~$k+1$ in~$\FOO{\F',R}$, according to Lemma~\ref{lemma:sunflower}. Let~$C$ denote the core of the sunflower and apply the following replacement. Remove each~$R$-constraint whose non-zero-closed core matches a tuple of the sunflower, and add its sunflower restriction with core~$C$ using an implementation according to Lemma~\ref{lemma:sunflimpl}.
Repeating this step until~$|\FOO{\F',R}|\leq k^d(d!)^2$ for all non-zero-valid relations~$R$ in~$\F'$ completes the construction.

Now, to prove correctness, let us consider a single replacement. We denote the tuples of the sunflower by~$(x_1,\dots,x_c,y_{i1},\dots,y_{ip})$, with~$i\in\{1,\dots,k+1\}$, i.e., w.l.o.g.\ with core~$C=\{1,\dots,c\}$ and petals~$P=\{c+1,\dots,c+p\}$. Let~$\phi$ be any satisfying assignment of weight at most~$k$ and consider any tuple~$(x_1,\dots,x_c,y_{i1},\dots,y_{ip})$ of the sunflower. There must be a constraint~$R(x_1,\dots,x_c,y_{i1},\dots,y_{ip},z_1,\dots,z_t)$ whose non-zero-closed core matches the tuple, w.l.o.g.\ we take the last positions of~$R$ to be zero-closed, let~$Z$ be those positions. Thus~$\phi$ must satisfy~$R(x_1,\dots,x_c,y_{i1},\dots,y_{ip},0,\dots,0)$, since the~$z_i$ are in zero-closed positions. Observe that, by maximum weight~$k$, the assignment~$\phi$ assigns~$0$ to all variables~$y_{i1},\dots,y_{ip}$ for an~$i\in\{1,\dots,k+1\}$. Thus~$\phi$ satisfies also~$R(x_1,\dots,x_c,0,\dots,0)$.
Hence for any constraint~$R(x_1,\dots,x_c,y_{i1},\dots,y_{ip},z_1,\dots,z_t)$, it satisfies~$\sunflowerrestriction_C(R(x_1,\dots,x_c,y_{i1},\dots,y_{ip},z_1,\dots,z_t))$ too. This permits us to replace each~$R$-constraint, whose non-zero-closed core matches a tuple of the sunflower, by an implementation of its sunflower restriction with core~$C$, according to Lemma~\ref{lemma:sunflimpl}. The implementation uses~$\zeroclosure_{P\cup Z}(\sunflowerrestriction_C(R(x_1,\dots,x_c,y_{i1},\dots,y_{ip},z_1,\dots,z_t)))$ and implications. By Lemma~\ref{lemma:zeroclosure} the added constraints~$\zeroclosure_{P\cup Z}(\sunflowerrestriction_C(R(x_1,\dots,x_c,.,\dots,.)))$ are \mergeable, implying that all constraints in~$\F'$ are \mergeable.

To establish that the construction can be performed efficiently, i.e., in time polynomial in the size of~$\F$, we use as a measure of~$\F'$ the sum of~$|\FOO{\F',R}|$ over all relations~$R$ occurring in~$\F'$. 
First, let us observe that, initially, this measure is bounded by the size of~$\F$ since each~$R$-constraint of~$\F'$ contributes at most one tuple to the corresponding set~$\FOO{\F',R}$ (recall that we start with~$\F'=\F$). Consider again the replacement made in each step: All~$R$-constraints matching one of the tuples of the sunflower are replaced by an implementation using~$\hat{R}=\zeroclosure_{P\cup Z}(\sunflowerrestriction_C(R))$ and implications. It is crucial to
observe that all added constraints contribute the same tuple to~$\FOO{\F',\hat{R}}$, consisting only of variables with positions in~$C$. This is caused by the application of the zero closure~$\zeroclosure$ on all positions but those in~$C$. Hence the~$k+1$ tuples of the sunflower are removed, as all matching~$R$-constraints are replaced, and only one new tuple is added to the set~$\FOO{\F',\hat{R}}$.
This decreases the measure, implying that the modification step is applied at most a number of times polynomial in the size of~$\F$.

Finally let us express the fact that each iteration of the replacement can be done efficiently. The set~$\FOO{\F',R}$ can be generated in one pass over the formula and since the arity is bounded by~$d$ there is only a constant number of relations.
The applications of Lemma~\ref{lemma:sunflower} to find a sunflower among the tuples of the sets~$\FOO{\F',R}$ take time polynomial in~$|\F|$, since the size~$|\FOO{\F',R}|\in O(|\F|)$. Observe that the size of~$\F'$ is bounded by a polynomial in~$|\F|$ at all times, since there is only a polynomial number of possible constraints of arity at most~$d$ on the variables of~$\F$.
\end{proof}

Now we are able to derive a polynomial kernelization for \MOS{\Gamma}. For a given instance~$(\F,k)$, it first generates an equivalent formula~$\F'$ according to Theorem~\ref{theorem:equivalentformula}. However,~$\F'$ will not replace~$\F$, rather, it allows us to remove variables from~$\F$ based on conclusions drawn from~$\F'$. This approach avoids the obstacle of a possible lack of expressibility from using only the language~$\Gamma$, and requires no additional assumptions or annotations to be made.

\begin{theorem} \label{th:upperfinal}
Let~$\Gamma$ be a \mergeable constraint language.  Then \MOS{\Gamma} admits a polynomial kernelization.
\end{theorem}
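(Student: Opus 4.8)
The plan is to use the equivalent formula from Theorem~\ref{theorem:equivalentformula} not as the kernel itself but as a certificate that singles out a small set of relevant variables, and then to output the original formula restricted to those variables. Given~$(\F,k)$, first run Theorem~\ref{theorem:equivalentformula} to obtain in polynomial time a formula~$\F'$ over a \mergeable language~$\Gamma'\supseteq\Gamma$ of arity at most~$d$ that agrees with~$\F$ on all assignments of weight at most~$k$ and has~$|\FOO{\F',R}|\in O(k^d)$ for every non-zero-valid relation~$R$ occurring in~$\F'$. As a second normalization, replace every zero-valid constraint of~$\F'$ by an implementation using only implications and negative clauses; such an implementation exists by Lemma~\ref{lm:zeroprop} (a zero-valid \mergeable relation is IHSB-), uses no auxiliary variables, does not change the solution set, and keeps~$\Gamma'$ finite and \mergeable. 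After this, the constraints of~$\F'$ are of three kinds: non-zero-valid constraints, implications, and negative clauses.

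Next I would pin down the relevant variables. Since~$\Gamma'$ is finite there are only~$O(k^d)$ distinct non-zero-closed cores among the non-zero-valid constraints of~$\F'$; let~$Y_0$ collect the variables occurring in them, so~$|Y_0|=O(k^d)$. Because~$\nonzeroclosedcore(R)$ is non-zero-valid whenever~$R$ is, every satisfying assignment of~$\F'$ sets at least one variable of~$Y_0$ true per non-zero-valid constraint. Now close~$Y_0$ forward along implications: in the transitive closure of the implication graph of~$\F'$, let~$\mathrm{out}(u)$ be the set of variables implied by~$u$ (including~$u$). If~$|\mathrm{out}(u)|>k$ then~$u$ must be false in every assignment of weight at most~$k$, so we can ignore such~$u$ and set~$Y:=Y_0\cup\bigcup\{\mathrm{out}(u):u\in Y_0,\ |\mathrm{out}(u)|\le k\}$; then~$|Y|\le(k+1)\,|Y_0|=O(k^{d+1})$. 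The point of this definition is the following closure property, tailored to the correctness argument: if~$v\in Y$ is true in some assignment of weight at most~$k$ and~$v$ implies~$w$ in~$\F'$, then~$w\in Y$ (if~$v\in Y_0$ with~$|\mathrm{out}(v)|>k$ the hypothesis is vacuous; otherwise~$w$ lies in~$\mathrm{out}(v)$ or in the out-set that put~$v$ into~$Y$).

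The kernel~$\F''$ is obtained from~$\F$ by substituting~$0$ for every variable outside~$Y$ and discarding constraints that become vacuously true. Its relations are relations of~$\Gamma$ with some coordinates fixed to~$0$, so~$\F''$ is a formula over a finite \mergeable language of arity at most~$d$ by Proposition~\ref{prop:preserved}; it has only the~$O(k^{d+1})$ variables of~$Y$, and since over~$Y$ there are only polynomially many constraints of arity at most~$d$ over a fixed finite language, its total size is polynomial in~$k$ (allowing the kernel to use relations of~$\Gamma$ with coordinates fixed to constants is harmless and standard, affecting neither the complexity nor the kernelization status of \MOS{\Gamma}). All of the above is polynomial-time.

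The main work --- and the step I expect to be delicate --- is showing that~$(\F,k)$ and~$(\F'',k)$ are equivalent, which boils down to a push-down lemma: if~$\F$ has a satisfying assignment~$\psi$ of weight at most~$k$, then it has one of the same weight whose support lies in~$Y$. Given such~$\psi$, let~$\phi$ agree with~$\psi$ on~$Y$ and be~$0$ elsewhere, so~$\phi\le\psi$ and~$\phi$ has weight at most~$k$; since~$\F'$ and~$\F$ agree on assignments of weight at most~$k$, it suffices to check that~$\phi$ satisfies each constraint of~$\F'$. Negative clauses are monotone decreasing, so~$\phi$ inherits them from~$\psi$. For an implication~$u\to w$: if~$\phi(u)=0$ it holds, and if~$\phi(u)=1$ then~$u\in Y$, $\psi(u)=1$, hence~$\psi(w)=1$, and the closure property gives~$w\in Y$, so~$\phi(w)=\psi(w)=1$. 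For a non-zero-valid constraint~$R(\dots)$ of~$\F'$, its non-zero-closed positions carry variables of~$Y_0\subseteq Y$ on which~$\phi$ and~$\psi$ agree, while on the zero-closed positions~$\phi$ only lowers values; the tuple assigned by~$\phi$ is thus obtained from the tuple assigned by~$\psi$ (which lies in~$R$) by lowering zero-closed positions one at a time, and each such step keeps the tuple in~$R$, so~$\phi$ satisfies the constraint. Hence~$\phi$ satisfies~$\F'$ and therefore~$\F$, and the converse direction is immediate (any weight-$\le k$ solution of~$\F''$ extends by zeros to one of~$\F$). The genuinely subtle points are choosing the definition of~$Y$ so that the implication case closes while~$|Y|$ stays polynomial --- this is exactly where the weight bound is spent, turning~$O(k^d)$ cores into~$O(k^{d+1})$ variables --- and making sure the Phase~1 normalization really reduces every zero-valid constraint of~$\F'$ to implications and negative clauses without introducing new variables.
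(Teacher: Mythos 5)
Your proposal is correct and follows essentially the same route as the paper's own proof: compute~$\F'$ via Theorem~\ref{theorem:equivalentformula}, rewrite zero-valid constraints as implications and negative clauses via Lemma~\ref{lm:zeroprop}, keep the~$O(k^d)$ variables occurring in non-zero-closed positions of non-zero-valid constraints together with the at most~$k$ variables each of them implies, and substitute~$0$ for everything else in the original~$\F$, yielding~$O(k^{d+1})$ variables. The one point the paper treats more explicitly is realizing the constant~$0$ inside~$\Gamma$ itself (adding~$k+1$ fresh variables~$z_1,\dots,z_{k+1}$ and replacing each constraint containing a~$0$ by~$k+1$ copies), which you defer to a ``standard'' remark; otherwise the arguments coincide.
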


\begin{proof}
Let~$(\F,k)$ be an instance of \MOS{\Gamma} and let~$d$ be the maximum arity of relations in~$\Gamma$. According to Theorem~\ref{theorem:equivalentformula}, we generate a formula~$\F'$, such that assignments of weight at most~$k$ are satisfying for~$\F$ if and only if they are satisfying for~$\F'$. Moreover, for each non-zero-valid relation~$R$, we have that~$|\FOO{\F',R}|\in O(k^d)$. Note that constraints of~$\F'$ have maximum arity~$d$. We allow the constant~$0$ to be used for replacing variables; a construction for this not using~$(x=0)$ follows at the end of the proof.

First, according to Lemma~\ref{lm:zeroprop}, we replace each zero-valid constraint of~$\F'$ by an implementation through negative clauses and implications. Next, we address variables that occur only in zero-closed positions constraints in~$\F'$. By definition of zero-closed positions it is immediate that setting such a variable to~$0$, does not affect the possible assignments for the other variables. By equivalence of~$\F$ and~$\F'$ with respect to assignments of weight at most~$k$, the same is true for~$\F$. We replace all such variables by the constant~$0$ in~$\F$ and~$\F'$, maintaining the equivalence with respect to assignments of weight at most~$k$.

Now, let~$X$ be the set of variables that occur in a non-zero-closed position of some non-zero-valid constraint of~$\F'$. For each variable~$x\in X$ count the number of variables that are implied by~$x$, i.e., that have to take value~$1$ if~$x=1$, by implication constraints in~$\F'$. If the number of those variables is at least~$k$, then there is no satisfying assignment of weight at most~$k$ for~$\F'$ that assigns~$1$ to~$x$. By equivalence of~$\F$ and~$\F'$ with respect to such assignments, we replace all occurrences of such a variable~$x$ by the constant~$0$, again maintaining the equivalence property.
Finally we replace all variables~$y\in V(\F')\setminus X$, that are not implied by a variable from~$X$ in~$\F'$, by the constant~$0$ in~$\F$ and~$\F'$. Note that such variables~$y$ occur only in zero-closed positions and in implications. It can be easily verified that this does not affect satisfiability with respect to assignments of weight at most~$k$.
For efficiency of this modification consider the fact that the number of implications in~$\F'$ is polynomial in the initial size of~$\F$, since there are at most two implications per pair of variables of~$\F$. This completes the kernelization.

Now we prove a bound of~$O(k^{d+1})$ on the number of variables in~$\F$. First, we observe that all remaining variables of~$\F$ must occur in a non-zero-closed position of some constraint of~$\F'$. We begin by bounding the number of variables that occur in a non-zero-closed position of some non-zero-valid~$R$-constraint, i.e., the remaining variables of the set~$X$. Observe that such a variable must occur in the corresponding tuple of~$\FOO{\F',R}$. Since there is only a constant number of relations of arity at most~$d$ and since~$\FOO{\F',R}\in O(k^d)$, this limits the number of such variables by~$O(k^d)$. For all other variables, their non-zero-closed occurrences must be in implications, since negative clauses are zero-closed on all positions. Thus, these variables must be implied by a variable of~$X$. Since each variable implies at most~$k-1$ other variables, we get an overall bound of~$O(k^{d+1})$.
Finally, the total size of~$\F$ is polynomial for a fix~$d$, since the number of variables is polynomial and the arity of the constraints is bounded.

To express the~$0$-constant, we add~$k+1$ new variables~$z_1,\dots,z_{k+1}$. Every constraint with at least one~$0$ is replaced by~$k+1$ copies, each time replacing~$0$ with a different~$z_i$. Clearly one of the~$z_i$ takes value~$0$ in any assignment of weight at most~$k$. Hence the original constraints with constant~$0$ are enforced. Conversely, given a satisfying assignment of weight at most~$k$ for the formula before making this replacement, we can easily extend it by assigning~$0$ to each~$z_i$. This construction does not affect our upper bound on the number of variables.
\end{proof}

\section{Kernel Lower Bounds}\label{section:lower}

We will now complete the dichotomy by showing that if~\MOS{\Gamma} is \NP-complete and some~$R \in \Gamma$ is not \mergeable, then the problem admits no polynomial kernelization unless \phcollapse.  The central concept of our lower bound construction is the following definition.

\begin{definition}
A \emph{log-cost selection formula} of arity~$n$ is a formula on variable sets~$X$ and~$Y$, with~$|Y|=n$ and~$|X|=n^{O(1)}$, such that there is no solution where~$Y=0$, but for any~$y_i\in Y$ there is a solution where~$y_i=1$,~$y_j=0$ for~$j \neq i$, and where a fix number~$w_n=O(\log n)$ variables among~$X$ are true. Furthermore, there is no solution where fewer than~$w_n$ variables among~$X$ are true.
\end{definition}

We will show that any~$\Gamma$ as described can be used to construct log-cost selection formulas, and then derive a lower bound from this.  The next lemma describes our constructions.

\begin{lemma} \label{lm:logcostimpl}
The following types of relations can implement log-cost selection formulas of any arity.
\begin{enumerate}
\item A 3-ary relation~$R_3$ such that~$\{(0,0,0), (1,1,0), (1,0,1)\} \subseteq R_3$ and~$(1,0,0) \notin R_3$, together with relations~$(x=1)$ and~$(x=0)$.
\label{lcimpl1}
\item A 5-ary relation~$R_5$ such that~$\{(1,0,1,1,0), (1,0,0,0,0), (0,1,1,0,1), (0,1,0,0,0)\} \subseteq R_5$ and~$(1,0,1,0,0),(0,1,1,0,0) \notin R_5$, together with relations~$(x \neq y)$,~$(x=1)$, and~$(x=0)$.
\label{lcimpl2}
\end{enumerate}
\end{lemma}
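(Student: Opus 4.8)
The plan is to build a log-cost selection formula by implementing an $n$-ary "selector" through a binary-counter / decoder gadget, where the $O(\log n)$ true variables of $X$ encode (in binary) the index of the single $y_i$ that is true. First I would handle case~\ref{lcimpl1}. Observe that the stated facts about $R_3$ — namely $(0,0,0),(1,1,0),(1,0,1)\in R_3$ and $(1,0,0)\notin R_3$ — say exactly that, after projecting to the first coordinate and using $(x=1),(x=0)$ appropriately, the relation $R_3(a,b,c)$ with $a$ pinned to $1$ forces "$b\lor c$" (at least one of $b,c$ is true), while with $a$ pinned to $0$ it permits $b=c=0$; and crucially it does not force $b$ or $c$ individually. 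So from $R_3$ together with the constants we can implement a guarded positive $2$-clause: a constraint that says "if the guard variable $a$ is $1$ then $b\lor c$."

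**Key steps.** With guarded positive $2$-clauses in hand, I would build a binary decoder as follows. Introduce $m=\lceil\log_2 n\rceil$ "bit" variables $x_1,\dots,x_m$ and their complements; since we only have the constants and $R_3$ (not $x\neq y$) in case~\ref{lcimpl1}, I would actually use a fresh pair $(x_j, \bar x_j)$ and force "$x_j \lor \bar x_j$" via $R_3$ with the guard pinned to $1$ — this guarantees at least one of each pair is true, hence at least $m$ of the $X$-variables are true in any solution. Then, for each index $i\in\{1,\dots,n\}$ with binary expansion $b_1\cdots b_m$, I would express "$y_i \rightarrow \bigwedge_j \ell_j^{(i)}$" where $\ell_j^{(i)}$ is $x_j$ if $b_j=1$ and $\bar x_j$ if $b_j=0$; a single implication $y_i\rightarrow \ell$ is a guarded positive $1$-clause, which we get from the guarded $2$-clause by pinning $c=0$ (using the constant $0$). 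The requirement "there is no solution with $Y=0$" is enforced by adding the positive clause $y_1\lor\cdots\lor y_n$ (built by a standard chain of ternary $R_3$-constraints with the first coordinate pinned to $1$, introducing auxiliary variables $s_t$ with $s_t \rightarrow (s_{t-1}\lor y_t)$ and $s_n$ pinned true). Finally I must check the weight bookkeeping: in the intended solution $y_i=1$ and all other $y_j=0$, we set $x_j,\bar x_j$ according to the bits of $i$ and its complement so that exactly $m$ of these $2m$ variables are true, set the chain variables $s_t$ monotonically ($0$ then $1$), and verify the total count of true $X$-variables is a fixed $w_n = m + O(1) = O(\log n)$ independent of $i$; I would pad the $\bar x_j$ side or add dummy forced-true constants so the count is exactly the same for every $i$ regardless of its Hamming weight. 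Conversely, the "$\geq m$ of each pair" constraints and "no fewer than $w_n$" follow because each of the $m$ disjunctions $x_j\lor\bar x_j$ contributes at least one true variable and they are on disjoint variables.

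**Case~\ref{lcimpl2}.** For $R_5$ the idea is the same but I first need to recover a guarded $2$-clause. The four tuples in $R_5$ come in two symmetric pairs distinguished by the first two coordinates (which behave like $a$ and $\bar a$, consistent with having $x\neq y$ available), and in each pair we see one tuple with the third coordinate $1$ together with a nonzero "petal" $(1,1,0)$ resp. $(1,0,1)$, and one all-zero-petal tuple $(0,0,0)$; the exclusions $(1,0,1,0,0),(0,1,1,0,0)\notin R_5$ say that when the third coordinate is forced to $1$ the two petal positions cannot both be $0$. So pinning coordinate~$3$ to $1$ via $(x=1)$ and using $(x\neq y)$ to tie coordinates~$1,2$ yields again a guarded positive $2$-clause "$a=1 \Rightarrow b\lor c$" (with the fifth coordinate, or the fourth, playing the role of the guard-controlled side), after which the construction proceeds verbatim as in case~\ref{lcimpl1}.

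**Main obstacle.** The delicate part is not the logical structure — guarded clauses plus a binary decoder is routine — but the \emph{exactness of the weight bound}: the definition demands a \emph{fixed} number $w_n$ of true $X$-variables in every single-$y_i$ solution, and demands that no solution uses fewer. Getting "no fewer than $w_n$" for free requires that the $w_n$ forced-true $X$-variables live in $w_n$ pairwise-disjoint "at-least-one-of" groups (the $m$ complementary pairs $x_j\lor\bar x_j$, plus the $O(1)$ overhead from the big disjunction chain and any constants). I would need to argue carefully that implications $y_i\rightarrow \ell_j^{(i)}$ never \emph{force} any $X$-variable when $Y=0$ (so the lower bound on true $X$'s really comes only from the disjunction groups, and $Y=0$ is ruled out solely by the $\bigvee y_i$ clause), and simultaneously that in the intended solution the decoder constraints for $i$ do not drag in any extra true $X$-variable beyond the $m$ chosen bits and the fixed overhead. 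A secondary nuisance is that case~\ref{lcimpl1} lacks $x\neq y$, so complementation of bits must be simulated by independent variables plus the $x_j\lor\bar x_j$ constraints, and I must double-check this does not let a cheating assignment set \emph{both} $x_j$ and $\bar x_j$ false — it cannot, precisely because that disjunction is enforced; but I should also make sure that setting both true (allowed) never lowers the count below $w_n$, which it doesn't.
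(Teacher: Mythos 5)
Your extraction of the basic gadget is correct: pinning the first coordinate of $R_3$ to $1$ yields a guarded clause forcing $b \lor c$ while leaving both $(1,1,0)$ and $(1,0,1)$ available, and $(0,0,0)\in R_3$ lets a false guard release the constraint. The fatal problem is in the assembly, specifically in how you rule out $Y=0$. You implement $y_1\lor\cdots\lor y_n$ by a \emph{linear chain} $s_n=1$, $s_t\rightarrow(s_{t-1}\lor y_t)$, and then claim the chain contributes only $O(1)$ true variables. It does not: if $y_i$ is the selected variable, every $s_t$ with $t\geq i$ is forced true, so the chain contributes $n-i+1$ true $X$-variables. This is $\Theta(n)$ in the worst case and, worse, \emph{depends on $i$}, so there is no single value $w_n$ for which every $y_i$ admits a solution with exactly $w_n$ true $X$-variables; both the $O(\log n)$ bound and the ``fixed $w_n$'' clause of the definition fail (and with them the parameter bound $k=O(m^2)$ needed in Theorem~\ref{th:lower}). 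The repair is to replace the chain by a \emph{balanced binary tree} of these guarded clauses, root pinned to $1$ and the $y_i$ at the leaves: exactly one root-to-leaf path of $\log_2 n$ internal variables is true in the intended solution, uniformly over $i$. This is precisely the paper's construction; once you have it, your binary decoder and the complemented bit pairs become superfluous, since the definition places no requirement that the $X$-variables identify which $y_i$ is true.

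A second, smaller gap is in case~2. After pinning coordinate~$3$ to $1$ and tying coordinates $1,2$ by $(x\neq y)$, the known tuples of $R_5$ only guarantee $(b,c)=(1,0)$ when $a=1$ and $(b,c)=(0,1)$ when $a=0$; nothing guarantees that both options are available for a fixed value of $a$. So the resulting constraint is not a guarded $2$-clause with a free choice of which disjunct to satisfy, and the construction cannot proceed ``verbatim'': the direction taken at each branching point is dictated by the attached $(a,\bar a)$ pair. The paper resolves this by introducing one such pair $(l_i,r_i)$ \emph{per level} of the tree, which still lets the true path reach any leaf while adding only two extra true variables per level, giving $w_n=2\log_2 n$.
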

\begin{proof}
Let~$Y=\{y_1, \ldots, y_n\}$ be the variables over which a log-cost selection formula is requested.  We will create ``branching trees'' over variables~$x_{i,j}$ for~$0 \leq i \leq \log_2 n$,~$1 \leq j \leq 2^i$, as variants of the composition trees used in~\cite{hfreenokernel}.  
Assume that~$n=2^h$ for some integer~$h$; otherwise pad~$Y$ with variables forced to be false, as assumed to be possible in both constructions.

The first construction is immediate.  Create the variables~$x_{i,j}$ and add a constraint~$(x_{0,1}=1)$.  Further, for all~$i,j$ with~$0 \leq i < h$ and~$1 \leq j \leq 2^i$, add a constraint~$R_3(x_{i,j}, x_{i+1,2j-1}, x_{i+1, 2j})$.
Finally, replace variables~$x_{h,j}$ by~$y_j$.  By the requirements on~$R_3$, for every internal variable~$x_{i,j}$, if~$x_{i,j}=1$ then one of its children ~$x_{i+1,2j-1}$ and~$x_{i+1,2j}$ must be true. Thus by transitivity, some variable on each level of the branching tree must be true, making~$Y=0$ is impossible.  Conversely, for any variable~$y_i$ on the leaf level, there is a solution where exactly the variables along the path from the root node to~$y_i$ are true.  
Thus~$w_n=h=\log_2 n$.

The second construction uses the same principle, but the construction is somewhat more involved.  Create variables~$x_{i,j}$ and a constraint~$(x_{0,1}=1)$ as before.  In addition, introduce for every~$0 \leq i \leq h-1$ two variables~$l_i$,~$r_i$ and a constraint~$(l_i \neq r_i)$.  Now the intention is that~$(l_i,r_i)$ decides whether the path of true variables from the root to a leaf should take a left or a right turn after level~$i$.  
Concretely, add for every~$i,j$ with~$0 \leq i \leq h-1$ and~$1 \leq j \leq 2^i$ a constraint~$R_5(l_i, r_i, x_{i,j}, x_{i+1,2j-1}, x_{i+1, 2j})$.
Now for every true variable~$x_{i,j}$, it is not allowed that~$x_{i+1,2j-1}=x_{i+1,2j}=0$, while depending on~$l_i$ and~$r_i$, either~$(x_{i+1,2j-1}=1, x_{i+1,2j}=0)$ or~$(x_{i+1,2j-1}=0, x_{i+1,2j}=1)$ is allowed.  This rules out the case~$Y=0$, while for each set of values of~$l_i$,~$r_i$ it is allowed to set among variables~$x_{i,j}$ exactly the variables along a path from the root to a leaf~$y_i$ to true, and other variables to false. 
In total, exactly two variables not among~$Y$ are true per level in such an assignment, making~$w_n = 2h = 2\log_2 n$.
\end{proof}

We now reach the technical part, where we show that any relation which is not \mergeable can be used to construct a relation as in Lemma~\ref{lm:logcostimpl}.  The constructions are based on the concept of a \emph{witness} that some relation~$R$ lacks a certain closure property.  For instance, if~$R$ is not \mergeable, then there are four tuples~$\alpha, \beta, \gamma, \delta \in R$ to which the \mergeoperation applies, but such that~$\alpha \land (\beta \lor \gamma) \notin R$; these four tuples form a witness that~$R$ is not \mergeable.
Using the knowledge that such witnesses exist, we use the approach of Schaefer~\cite{schaefersat}, identifying variables according to their occurrence in the tuples of the witness, to build relations with the properties we need.

\begin{lemma} \label{lm:constant}
Let~$\Gamma$ be a set of relations such that \MOS{\Gamma} is \NP-complete and some~$R \in \Gamma$ is not \mergeable.
Under a constraint that at most~$k$ variables are true,~$\Gamma$ can be used to force~$(x=0)$ and~$(x=1)$.  
Furthermore, there is an implementation of~$(x=y)$ using~$R$,~$(x=0)$, and~$(x=1)$.
\end{lemma}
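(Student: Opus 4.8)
The plan is to handle the three required implementations — forcing $(x=0)$, forcing $(x=1)$, and implementing $(x=y)$ — in that order, since the latter ones build on the former. The key tool throughout is Schaefer's technique of identifying variables of a relation according to their pattern of occurrence in a fixed witness. Here the witness is a quadruple $\alpha,\beta,\gamma,\delta\in R$ to which the \mergeoperation applies but with $\alpha\land(\beta\lor\gamma)\notin R$. Each position $i$ of $R$ has a \emph{type} $(\alpha_i,\beta_i,\gamma_i,\delta_i)\in\{0,1\}^4$, and for each type that actually occurs we introduce one fresh variable; identifying all positions of a common type yields a derived relation $R'$ whose columns are indexed by the occurring types. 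The constraint $R'$ restricted to the assignments coming from $\alpha,\beta,\gamma,\delta$ (and the forbidden one $\alpha\land(\beta\lor\gamma)$) already has very constrained behaviour, and by further assignments/identifications one can read off small relations with useful properties.

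First I would force $(x=0)$. Since \MOS{\Gamma} is \NP-complete, by Theorem~\ref{th:mosnp} $\Gamma$ is not zero-valid, so some $R_0\in\Gamma$ has no all-zero tuple; identifying all its variables into a single variable $x$ gives a nonempty relation on $\{x\}$ not containing $0$, hence $(x=1)$ — wait, I actually want the opposite, so more carefully: a non-zero-valid relation, after identifying all variables, yields either $\{(1)\}$ (giving $(x=1)$ directly) or $\emptyset$; in the latter case one instead argues via the bound $k$. The cleanest route is: under the global budget ``at most $k$ true variables'', take $k+1$ fresh variables and a non-\mergeable (hence in particular, after the zero-valid reduction, nontrivial) structure to pin down a variable that must be $0$ — concretely, since $R$ is not \mergeable it is in particular not closed under $\land$ restricted appropriately, which forces some coordinate pattern that, combined with the weight bound, yields a variable that cannot be $1$. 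Once $(x=0)$ is available, $(x=1)$ is obtained from any non-zero-valid $R_0\in\Gamma$: substitute the constant $0$ into all but one position in all possible ways until some position is forced to $1$ (such a position exists because otherwise $R_0$ would be zero-valid or empty after substitutions), giving $(x=1)$.

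With both constants in hand, the remaining task is to implement $(x=y)$ from $R$, $(x=0)$, $(x=1)$. Here I use the witness quadruple again. Because $\alpha\land\delta\le\beta\le\alpha$ and $\beta\land\gamma\le\delta\le\gamma$, the type $(\alpha_i,\beta_i,\gamma_i,\delta_i)$ of any position is constrained: for instance $\beta_i=1\Rightarrow\alpha_i=1$, and $\beta_i=\gamma_i=1\Rightarrow\delta_i=1$. The forbidden tuple $\alpha\land(\beta\lor\gamma)$ differs from the four witnesses, so there must be at least one position where $\alpha_i=1$, $\beta_i\lor\gamma_i=1$, yet this combination is "new'' — tracking which type columns are responsible, and substituting the constants $0$ and $1$ into the columns whose values are already pinned by the witness, collapses $R'$ down to a binary relation on two surviving columns $u,v$ that contains the pairs forced by $\alpha,\dots,\delta$ but excludes the pair forced by $\alpha\land(\beta\lor\gamma)$; a short case analysis on which binary relation this is (it will be $(u=v)$, or something from which $(u=v)$ follows together with the constants) yields the implementation. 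The main obstacle I anticipate is exactly this last case analysis: making sure that for \emph{every} possible non-\mergeable witness the resulting collapsed binary relation is strong enough to implement equality, rather than something weaker like an implication or a negative clause, and in particular correctly using the two inequalities $\alpha\land\delta\le\beta$ and $\beta\land\gamma\le\delta$ to rule out the degenerate collapses. A secondary subtlety is that the first step — forcing $(x=0)$ while respecting the weight-$k$ constraint rather than having it for free — needs the \NP-completeness hypothesis in an essential way, so I would make sure the argument there does not accidentally only use non-\mergeability.
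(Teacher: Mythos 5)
Your plan has the right general flavour (Schaefer-style identification of positions according to a non-\mergeability witness), but as it stands it has genuine gaps in all three steps, and the proposed order of the steps is itself a problem. The paper's order is: first force $(x=1)$, then use the witness to build a binary relation that yields $(x=y)$ (or, in a degenerate case, $(x=y=0)$), and only then obtain $(x=0)$ from equality plus the weight budget. Your plan starts with $(x=0)$, and the argument you offer for it --- that non-\mergeability ``is in particular not closed under $\land$ restricted appropriately, which forces some coordinate pattern that, combined with the weight bound, yields a variable that cannot be $1$'' --- is not an argument; a non-\mergeable, non-zero-valid relation need not fail any $\land$-closure in a way that hands you a zero-forcing gadget. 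The weight bound forces a variable to $0$ only once you have something like $k$ copies of $(x=y_i)$ or of an implication $(x\to y_i)$, i.e.\ you need the equality gadget (or its degenerate $(x=y=0)$ outcome) \emph{before} you can force $(x=0)$, not after.

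The other two steps also need repair. For $(x=1)$: identifying all variables of a non-zero-valid relation can yield the empty relation, and your fallback is left unspecified; the working trick is to take a maximal set $I$ of positions that can simultaneously be $1$ with the rest $0$, identify $I$ to $x$ and the complement to $y$, obtaining $R'(x,y)$ with $(1,0)\in R'$ and $(0,0),(1,1)\notin R'$, hence either $(x=1\land y=0)$ or $(x\neq y)$; in the latter case $k+1$ disequalities $x\neq y_i$ force $x=1$ under the budget. Your alternative ``substitute $0$ into all but one position; such a position is forced to $1$, else $R_0$ would be zero-valid or empty'' is false --- e.g.\ $R_0=\{(1,1)\}$ gives only empty restrictions. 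For $(x=y)$: classifying positions by the full type $(\alpha_i,\beta_i,\gamma_i,\delta_i)$ leaves too many surviving columns and makes your anticipated case analysis genuinely hard; the trick you are missing is to classify only by the values in $\alpha$, $\beta$, and $\sigma=\alpha\land(\beta\lor\gamma)$ (four classes: $\beta<\sigma$, $\sigma<\alpha$, constant~$1$, constant~$0$), so that the identified relation $R'(x,y)$ provably contains $(1,1)$ and $(0,0)$ but not $(1,0)$, whence the symmetrization $R'(x,y)\land R'(y,x)$ is exactly $(x=y)$ --- no further case analysis on ``which binary relation this is'' is needed. The constant-$0$ class must be handled without the constant (by first placing $y$ there, accepting that the outcome may be $(x=y=0)$, which then supplies $(x=0)$ and lets you redo the construction with a genuine zero). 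You correctly flag the case analysis as the main obstacle, but the proposal does not overcome it.
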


\begin{proof}
First of all, we show how to force~$(x=1)$.  Since \MOS{\Gamma} is \NP-complete, it contains some relation that is not zero-valid; let~$R \in \Gamma$ be such a relation.
If~$R$ is one-valid, then~$R(x,\ldots,x)$ is equivalent to~$(x=1)$. Else, let~$r$ be the arity of~$R$ and let~$I$ be a maximal set such that~$R(x_1, \ldots, x_r)$ holds for~$x_i=1$ for~$i \in I$,~$x_i=0$ else.
Identify all~$x_i$,~$i \in I$, to a single variable~$x$, and all~$x_i$,~$i \notin I$, to a single variable~$y$.  This forms a new constraint~$R'(x,y)$, where~$(1,0) \in R'(x,y)$ and~$(0,0),(1,1) \notin R'(x,y)$.  Thus~$R'$ is either~$(x=1\land y=0)$ or~$(x \neq y)$.
In the former case we are done; in the latter case, constraints~$x \neq y_i$ for~$1 \leq i \leq k+1$ force~$x=1$ and all~$y_i=0$ in any solution with at most~$k$ true variables.

Now we can use this to force~$(x=0)$ and~$(x=y)$.
Let~$\alpha$ through~$\delta$ be a witness that~$R$ is not \mergeable; let~$\sigma=\alpha \land (\beta \lor \gamma) \notin R$ be the produced tuple.
Notice that~$\beta < \sigma < \alpha$, meaning that the positions of~$R$ are of four types: those where~$\beta < \sigma$, those where~$\sigma < \alpha$, and optionally positions which are constant among these tuples, i.e.\ true in~$\beta$ or false in~$\alpha$.
Call these positions~$C_x$,~$C_y$,~$C_1$, and~$C_0$, in the order they were introduced.
Place a variable~$z_1=1$ in all positions~$C_1$, if any, and variables~$x$ and~$y$ in all positions~$C_x$ resp.~$C_y$.
Now, if there are no positions~$C_0$, then this creates a constraint~$R'(x,y)$ such that~$R'(x,y) \land R'(y,x)$ implements~$(x=y)$ directly.
This can be used to force~$x=0$: create~$k$ variables~$y_i$ and let~$x=y_i$ for every~$i$.  In any solution with at most~$k$ true variables, all these variables are false.

Otherwise, if there are positions~$C_0$, then place the variable~$y$ in these positions as well, and apply~$R'(x,y) \land R'(y,x)$ again; the result is either~$(x=y)$ or~$(x=y=0)$.
Finally, placing a variable~$z_0=0$ in positions~$C_0$ lets us implement~$(x=y)$ as above.
\end{proof}

\begin{lemma} \label{lm:gammalogcost}
Let \MOS{\Gamma} be NP-complete, and not \mergeable.  Then \MOS{\Gamma} can express a log-cost selection formula of any arity.
\end{lemma}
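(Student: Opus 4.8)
The plan is as follows. Since $\Gamma$ is not \mergeable, fix a relation $R \in \Gamma$ that is not \mergeable and a witness for this: tuples $\alpha,\beta,\gamma,\delta \in R$ to which the \mergeoperation applies, but with $\sigma := \alpha \land (\beta \lor \gamma) \notin R$. By Lemma~\ref{lm:constant} we may assume, under the overall weight bound, that the constants $(x=0)$ and $(x=1)$ as well as the equality $(x=y)$ are available as constraints. It then suffices to produce from these ingredients a relation of one of the two types $R_3$, $R_5$ described in Lemma~\ref{lm:logcostimpl}: that lemma then yields the desired log-cost selection formula of arbitrary arity, completing the proof.

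To obtain such a relation I would classify the positions of $R$ by their \emph{pattern} $(\alpha_i,\beta_i,\gamma_i,\delta_i) \in \{0,1\}^4$. The \mergeoperation preconditions $\alpha \land \delta \le \beta \le \alpha$ and $\beta \land \gamma \le \delta \le \gamma$ permit only seven patterns; two of these, $(0,0,0,0)$ and $(1,1,1,1)$, take a constant value across $\alpha,\beta,\gamma,\delta$ and also on $\sigma$, so positions carrying them may be replaced by the constants $0$ and $1$. Identifying, via $(x=y)$, all positions sharing one of the remaining five patterns to a common variable collapses $R$ into a relation $R^\ast$ of arity at most five which still contains the images of $\alpha,\beta,\gamma,\delta$ and still excludes the image of $\sigma$. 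A short inspection of $\sigma = \alpha \land (\beta \lor \gamma)$, pattern by pattern, shows moreover that $R^\ast$ is constrained: the ``petal'' patterns $(1,0,0,0)$ and $(1,0,1,0)$ must both occur, for otherwise $\sigma = \alpha$, resp.\ $\sigma = \beta$, contradicting that $\alpha,\beta \in R$ but $\sigma \notin R$; and at least one of the patterns $(0,0,1,0)$, $(0,0,1,1)$, $(1,1,0,0)$ must occur, for otherwise $\sigma = \gamma$.

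With this normal form in hand, the remaining work is a case analysis over which of the five patterns survive. In the minimal cases $R^\ast$ is ternary, and one checks that after a suitable reordering of its three coordinates its guaranteed members and the non-member coincide with those required of $R_3$ in Lemma~\ref{lm:logcostimpl}(\ref{lcimpl1}); we then invoke that case with the constants. In the cases where more patterns survive I would first seek further identifications (or substitutions) that preserve all four images of $\alpha,\ldots,\delta$ while still excluding that of $\sigma$, again aiming for a ternary $R_3$; and where no such reduction is available, build instead a short formula over $R^\ast$ together with the constants and a disequality $(x \ne y)$ — the latter extracted from $R$ and the NP-completeness of \MOS{\Gamma}, which by Theorem~\ref{th:mosnp} excludes $\Gamma$ from being Horn or width-$2$ affine — matching the prescribed member and non-member tuples of $R_5$ in Lemma~\ref{lm:logcostimpl}(\ref{lcimpl2}). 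Throughout, the invariant that the image of $\sigma$ stays excluded is exactly what certifies that the constructed gadget enforces the intended selection relation.

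I expect the case analysis of the last paragraph to be the main obstacle: one must verify that every admissible combination of surviving patterns — including those that do not collapse to a ternary relation — can still be steered, using only identifications, the two constants, and disequalities, into one of the two canonical relations of Lemma~\ref{lm:logcostimpl}; and in the $R_5$ branch one must actually secure the disequality $(x\ne y)$, which Lemma~\ref{lm:constant} does not hand us directly. A secondary technical point is that the witness $(\alpha,\beta,\gamma,\delta)$, and the choice of which positions to identify, may have to be made with some care (e.g.\ minimally with respect to $R$) so that the resulting $R^\ast$ is indeed of one of the two manageable shapes.
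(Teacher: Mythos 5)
Your setup is exactly the paper's, just in different notation: the seven position patterns $(\alpha_i,\beta_i,\gamma_i,\delta_i)$ you enumerate are precisely the variable types the paper derives from the core/petal partition of Proposition~\ref{prop:sunflowerdecomp}, namely $Z_0=(0,0,0,0)$, $Z_1=(1,1,1,1)$, $C_{10}=(1,1,0,0)$, $C_{01}=(0,0,1,1)$, $P_{11}=(1,0,1,0)$, $P_{10}=(1,0,0,0)$, $P_{01}=(0,0,1,0)$; and your observation that $(1,0,0,0)$ and $(1,0,1,0)$ must occur, together with at least one of the other three non-constant patterns, is the paper's statement that $P_{10}$, $P_{11}$ and one further non-constant type are always represented in a witness. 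So the strategy and the normal form are right. The problem is that the case analysis you defer as ``the main obstacle'' is where essentially all of the content of the proof lives, and it is not a routine verification: as written, your plan would get stuck.

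Concretely, three ingredients are missing. First, the paper splits on whether $R$ is dual Horn. If it is, then $\beta\lor\gamma\in R$ is an \emph{additional} positive tuple (equal to $(1,0,1,1,1,0,1)$ in the seven-type ordering), and it is this tuple --- not any of $\alpha,\beta,\gamma,\delta$ --- that supplies the entry $(1,0,1)$ needed for $R_3$ after identifying $C_{01}$ with $P_{01}$ and fixing $C_{10}=1$; it also guarantees that $C_{01}$ or $P_{01}$ occurs at all. Without this tuple the ternary cases do not close, and in the dual Horn case you cannot fall back on $R_5$, because a dual Horn $\Gamma$ need not implement $(x\neq y)$. Second, your derivation of $(x\neq y)$ is not sound as stated: Theorem~\ref{th:mosnp} gives you a relation that is not Horn, but a non-Horn witness only yields $(x\neq y)$ \emph{or} $(x\lor y)$; you must combine it with a witness that $R$ itself is not closed under disjunction (available exactly in the non-dual-Horn branch), which yields $(x\neq y)$ or $(\neg x\lor\neg y)$, and conjoin the two. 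Third, several admissible pattern combinations (e.g.\ when $P_{01}$ is absent, or when all five non-constant types are present) are not resolved by identifications and substitutions at all, but by conjoining two permuted copies of the derived relation, as in $R'(v,w,x,y,z)\land R'(w,v,x,z,y)$, to manufacture the $5$-ary relation $R_5$ of Lemma~\ref{lm:logcostimpl}; nothing in your sketch indicates these gadgets, and they are the reason $R_5$ was defined with the two auxiliary ``direction'' positions in the first place.
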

\begin{proof}
Let~$R \in \Gamma$ be a relation that is not \mergeable, and 
let~$\alpha$ through~$\delta$ be a witness of this.  By Prop.~\ref{prop:sunflowerdecomp}, partition the positions of~$R$ into core and petals in a way that agrees with the witness.
Group the variables w.r.t. their values in these four tuples into constant variables~$Z_1,Z_0$, non-constant core variables~$C_{10}$ and $C_{01}$, and non-constant petal variables~$P_{11}$, $P_{10}$, $P_{01}$ (where the indices indicate membership in~$\alpha$ and~$\gamma$, as~$\beta$ and~$\delta$ are now determined by this). 
Identify variables according to type, and order them in the order of the previous sentence.  We now have a relation whose arity depends on which variable types that are represented in the witness.  
In the case that all seven types are present, we have implemented a~$7$-ary relation~$R_7$ about which we know the following (the final tuple is produced on the witness tuples by the \mergeoperation).
\begin{eqnarray}
(1,0, 1,0, 1,1,0) &\in& R_7 \nonumber \\
(1,0, 1,0, 0,0,0) &\in& R_7 \nonumber \\
(1,0, 0,1, 1,0,1) &\in& R_7 \nonumber \\
(1,0, 0,1, 0,0,0) &\in& R_7 \nonumber \\
(1,0, 1,0, 1,0,0) &\notin& R_7 \nonumber
\end{eqnarray}
To distinguish the final tuple from the witness tuples, we can observe that variable types~$P_{11}$,~$P_{10}$, and one further non-constant variable type must be represented by the witness.
The constant positions can be ignored by putting variables~$z_1=1$ and~$z_0=0$ in these positions, by Lemma~\ref{lm:constant}. 
Thus we implement a relation of arity between three and five.

First assume that~$R$ is dual Horn.  Then the tuple~$\beta \lor \gamma \in R$, i.e.~$(1,0,1,1, 1,0,1) \in R_7$, and the variable type~$C_{01}$ or~$P_{01}$ must occur. 
Identify~$C_{01}$ and~$P_{01}$ if both occur, and set~$C_{10}=1$ if this type occurs, implementing a 3-ary relation~$R'$ which matches~$R_3$ of Lemma~\ref{lm:logcostimpl}, with the variable types being~$P_{11}$,~$P_{10}$, and~$(P_{01}=C_{01})$ in the order used in Lemma~\ref{lm:logcostimpl}.
Indeed,~$\{\alpha, \beta \lor \gamma, \beta\} \subseteq R$, representing the positive requirement, while there can be no tuple~$(1,0,0) \in R'$, whether~$C_{10}$ occurs or not.
This and Lemma~\ref{lm:constant} fulfills the conditions of Lemma~\ref{lm:logcostimpl}, part 1.

Otherwise~$R$ is not dual Horn, in which case it is not closed under disjunction.  Using a witness for this, we can implement a 2-ary relation~$R_2$ which is either~$(x \neq y)$ or~$(\neg x \lor \neg y)$.   
Likewise, by \NP-completeness we have a relation which is not Horn, which can implement~$(x \neq y)$ or~$(x \lor y)$.
Combining them, we find that we can always implement~$(x \neq y)$, and thus are free to use~$R_5$ of Lemma~\ref{lm:logcostimpl}.
We implement a relation~$R'$ as before, again letting the variable types appear in the order~$(C_{10}, C_{01}, P_{11}, P_{10}, P_{01})$.
We go through the cases of non-empty non-constant variable types, and show that our relation~$R'$ can implement a relation matching~$R_3$ or~$R_5$ of Lemma~\ref{lm:logcostimpl}.
\begin{enumerate}
\item If~$R'$ has arity three, with the third variable type being~$P_{01}$ or~$C_{01}$, then we implement a relation matching~$R_3$ with $\{(1,1,0), (1,0,1), (0,0,0)\} \subseteq R'$ and~$(1,0,0) \notin R'$.

\item If~$R'$ has arity three, and the third type is~$C_{10}$, then we implement a relation~$R'$ with $\{(1,1,1), (1,0,0), (0,1,0), (0,0,0) \subseteq R'$ and~$(1,1,0) \notin R'$.  Use~$R'(v,x,y) \land R'(w,x,z)$ to implement a relation matching~$R_5$.
\label{glccase2}

\item If the core type~$C_{10}$ is not present, then identify~$C_{01}$ with~$P_{01}$.  This implements a relation~$R'$ matching~$R_3$.

\item If the core type~$C_{01}$ is not present, we need two cases.  If~$(0,1,0,0) \notin R'$, then identify~$C_{10}$ with~$P_{10}$ to produce a 3-ary relation matching~$R_3$. Otherwise, force~$P_{01}=0$ to produce a 3-ary relation as in case~\ref{glccase2}.

\item If the petal type~$P_{01}$ is not present, then~$R'(v,w,x,y) \land R'(w,v,x,z)$ implements a relation matching~$R_5$.

\item If all five types are present, then~$R'(v,w,x,y,z) \land R'(w,v,x,z,y)$ implements a relation matching~$R_5$.
\end{enumerate}
Thus in every case, we meet the conditions of part 1 or 2 of Lemma~\ref{lm:logcostimpl}.
\end{proof}

We now show our result, using the tools of~\cite{DBLP:conf/esa/BodlaenderTY09}.  We have the following definition.
Let~$\Q$ and~$\Q'$ be parameterized problems. 
A \emph{polynomial time and parameter transformation} from~$\Q$ to~$\Q'$ is a polynomial-time mapping~$H:\Sigma^*\times\N\to\Sigma^*\times\N:(x,k)\mapsto(x',k')$ such that
$$\forall(x,k)\in\Sigma^*\times\N:((x,k)\in\Q\Leftrightarrow (x',k')\in\Q')\mbox{ and }k'\leq p(k),$$
for some polynomial~$p$.

We will provide a polynomial time and parameter transformation to \MOS{\Gamma} from Exact Hitting Set$(m)$, defined as follows.
\begin{quotation}
\noindent\textbf{Input:} A hypergraph~$\h$ consisting of~$m$ subsets of a universe~$U$ of size~$n$.

\noindent\textbf{Parameter:}~$m$.

\noindent\textbf{Task:} Decide whether there is a set~$S \subset U$ such that~$|E \cap S|=1$ for every~$E \in \h$. 
\end{quotation}
It was shown in~\cite{DBLP:conf/esa/BodlaenderTY09} that polynomial time and parameter transformations preserve polynomial kernelizability; thus our lower bound will follow.
To establish a lower bound for Exact Hitting Set$(m)$,
we need the following notions from~\cite{BodlaenderDFH08,DBLP:conf/esa/BodlaenderTY09}. Let~$\Q$ be a parameterized problem. A \emph{composition algorithm} for~$\Q$ is an algorithm that on input~$(x_1,k),\dots,(x_t,k)\subseteq\Sigma^*\times\N$ uses time polynomial in~$\sum^t_{i=1}|x_i|+k$ and outputs~$(y,k')$ with~$k'$ bounded by a polynomial in~$k$ and such that~$(y,k')\in\Q$ if and only if~$(x_i,k)\in\Q$ for at least one~$i\in\{1,\dots,t\}$. The problem~$\Q$ is then said to be \emph{compositional}.

The \emph{derived classical problem}~$\tilde{\Q}$ of~$\Q$ is defined by~$\tilde{\Q}=\{x\#1^k\mid (x,k)\in\Q\}$, where~$\#\notin\Sigma$ is the blank letter and~$1$ is any letter from~$\Sigma$.

Following Dom et al.~\cite{colorsids}, we next give our equivalence of a \emph{colored} version of the problem,
which we call Exact CSP$(m+n)$, defined as follows.
\begin{quotation}
\noindent\textbf{Input:} A CSP instance with~$n$ variables of arbitrary finite domain, and~$m$ constraints Exactly-One($v_{i_1}=b_{i_1}, \dots, v_{i_r}=b_{i_r}$) of arbitrary arity, where each~$v_i$ is a variable and~$b_i$ a value from the respective variable domain.

\noindent\textbf{Parameter:} $m+n$.

\noindent\textbf{Task:} Decide whether there is an assignment of a value to every variable that satisfies each constraint (i.e.\ for each constraint, exactly one statement~$v=b$ is true). 
\end{quotation}
We show that Exact CSP$(m+n)$ admits no polynomial kernelization; the result follows by a trivial problem reduction from Exact CSP$(m+n)$ to Exact Hitting Set$(m)$.
The proof follows the same lines as the lower bound for Unique Coverage in~\cite[Sec.~4.2]{colorsids}, but the construction is somewhat simplified, and the lower bound somewhat stronger (as Exact Hitting Set$(m)$ is equivalent to a special case of Unique Coverage).\footnote{Dom et al.\ also give a proof for the problem \textsc{Bipartite Perfect Code}, which is the same underlying problem as here, but the parameterization is different.}

\begin{lemma} \label{lm:ehsnokernel}
Exact Hitting Set$(m)$ admits no polynomial kernelization unless \phcollapse.
\end{lemma}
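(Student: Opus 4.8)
The plan is to establish that Exact Hitting Set$(m)$, or rather its colored relative Exact CSP$(m+n)$, is compositional in the sense of Bodlaender et al., and then invoke the composition framework (via~\cite{BodlaenderDFH08,FS08}) to conclude that no polynomial kernelization exists unless \phcollapse. Since the statement in the excerpt is phrased directly about Exact Hitting Set$(m)$ but the surrounding text announces a proof via Exact CSP$(m+n)$ following~\cite[Sec.~4.2]{colorsids}, I would first reduce: a trivial reduction turns an Exact CSP$(m+n)$ instance into an Exact Hitting Set$(m)$ instance (each statement $v=b$ becomes an element, each variable's domain becomes a set forcing exactly one choice, and each Exactly-One constraint becomes a hyperedge), so it suffices to show Exact CSP$(m+n)$ has no polynomial kernel. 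By the machinery cited, this follows once we exhibit a composition algorithm for Exact CSP$(m+n)$.

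Next I would build the composition. Given $t$ instances $(x_1,k),\dots,(x_t,k)$ of Exact CSP$(m+n)$, all with the same parameter $k=m+n$, the goal is to produce a single instance $(y,k')$ with $k'$ polynomial in $k$ (ideally $k'=k+O(\log t)$ or $k'$ polynomial in $k$ alone after a standard trick) that is a yes-instance iff at least one $x_i$ is. The natural approach — following the Unique Coverage argument of Dom et al. — is to take the disjoint union of the variable sets and constraint sets of all $t$ instances, and then add $O(\log t)$ selector variables (of small domain, or boolean) whose assignment encodes an index $i\in\{1,\dots,t\}$, together with constraints ensuring that only the constraints belonging to the selected instance $x_i$ need to be satisfied while the variables of the other instances are forced to a trivial consistent assignment. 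The number of new variables and constraints is $O(\log t)$, so $k'=m+n+O(\log t)$; if we first group the $t$ instances by the value of $k$ (there are only polynomially many relevant values since each $|x_i|$ is at least some function of $k$), and recall that in the composition definition $t$ may be taken at most exponential in $\max|x_i|$, then $\log t$ is polynomially bounded in $\max|x_i|$, and a standard padding/grouping argument (as in~\cite{BodlaenderDFH08}) upgrades this to the parameter bound required for the OR-distillation obstruction. The running time is polynomial in $\sum_i|x_i|+k$ since we only copy the instances and add logarithmically many gadgets.

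The main obstacle is getting the selection gadget exactly right within the restrictive Exact CSP format: every constraint must be of Exactly-One type, so "turning off" the constraints of unselected instances cannot be done by simply deleting them — instead one must add, for each original constraint, an auxiliary "escape" statement $v=b$ that becomes the satisfied one precisely when that constraint's instance is not selected, and then tie all escape statements of instance $x_i$ to the selector via further Exactly-One constraints. One has to verify that this does not blow up the arity or variable count beyond what the parameter allows, and that it does not introduce spurious satisfying assignments (the escape mechanism must be all-or-nothing per instance). This bookkeeping is where the Unique Coverage proof is genuinely simplified here, because Exact Hitting Set is a cleaner target than Unique Coverage; I expect the construction to mirror~\cite[Sec.~4.2]{colorsids} closely but with fewer cases. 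Once the composition algorithm is verified correct, the lower bound is immediate from the cited theorems, and combined with Lemma~\ref{lm:gammalogcost} and a polynomial time and parameter transformation (to be given next) it will yield the kernelization lower bound for \MOS{\Gamma}.
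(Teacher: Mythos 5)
Your overall architecture (show Exact CSP$(m+n)$ compositional, invoke the framework of Bodlaender et al.\ and Fortnow--Santhanam, then reduce to Exact Hitting Set$(m)$) matches the paper, and your final reduction to Exact Hitting Set$(m)$ is exactly the paper's. But the composition step itself, as you describe it, does not work, and the failure is precisely in the sentence ``The number of new variables and constraints is $O(\log t)$, so $k'=m+n+O(\log t)$.'' The parameter of Exact CSP$(m+n)$ is the \emph{total} number of variables plus constraints of the composed instance, and a disjoint union of $t$ instances already contains $tn$ variables and $tm$ constraints before you add any selector gadget. Hence $k' \geq t(m+n)$, which is not polynomial in $k=m+n$ for the values of $t$ one must handle, and no amount of care with ``escape'' statements repairs this. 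This is the one genuinely nontrivial idea you are missing.

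The paper's composition avoids the blow-up by \emph{merging} rather than disjointly uniting: since domains and constraint arities are unbounded, variable number $i$ of all $t$ instances is collapsed into a single variable $v_i'$ whose domain is the tagged union of the $t$ original domains, and the $t$ constraints numbered $i$ are concatenated into one Exactly-One constraint over these tagged values. This keeps exactly $n$ variables and $m$ constraints; all the growth is pushed into the instance \emph{size} (domains and arities), which the composition definition permits. Consistency --- that all variables pick values originating from the same input instance --- is then enforced not by a selector variable but by assigning each instance a binary ID of length $\ell=O(m\log n)$ (using that $t\le n^m$ may be assumed, since Exact CSP is solvable in $O^*(n^m)$ time and otherwise all instances can be solved outright) and adding, for each consecutive pair $(v_i',v_{i+1}')$ and each bit position, a testing constraint containing the values of $v_i'$ whose ID has that bit equal to $1$ and the values of $v_{i+1}'$ whose ID has that bit equal to $0$; mismatched IDs hit such a constraint zero or two times. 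This adds only $O(mn\log n)$ constraints, so $k'$ is polynomial in $k$. If you replace your disjoint-union-plus-selector gadget by this merging-plus-ID construction, the rest of your argument goes through.
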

\begin{proof}
First, Exact CSP$(m+n)$ is \NP-complete (even if all domains have cardinality 2, in which case it is the Exact Satisfiability problem). 
Also, the problem can be solved in time $O^*(n^m)$.  Decide for each constraint the identity of the variable which will hit it (but not yet its value).
Assuming that a variable~$v$ is chosen for a particular constraint, for every statement~$(v_i=b_j)$ in the constraint with~$v\neq v_i$, 
remove the value~$j$ from the domain of the variable~$v_i$, and restrict the domain of~$v$ to those values which would hit the constraint.  
Repeat for all constraints, backtracking if necessary; the size of the search tree is at most~$n^m$.  
Thus, we may assume in our composition algorithm that the number of input instances is bounded by~$n^m$ (or else we solve all instances in time polynomial in the total input size).

Assume, then, that there are~$t$ input instances.  Let~$n$ be the maximum number of variables and~$m$ the maximum number of constraints; for simplicity of the argument,
assume that all input instances have the same numbers of variables and constraints (or else do trivial padding with unary-domain variables or trivially true constraints,
such as ``variable 1 has exactly one value'').  Number the variables from~$1$ to~$n$ and the constraints from~$1$ to~$m$ in each input instance.

Now create the composed instance.  First collect all the values of variables numbered~$i$ into the domain of a single variable~$v_i'$,
say with values~$(j_1,j_2)$ signifying ``value~$j_2$ in the domain of input instance~$j_1$''.  Similarly concatenate all constraints numbered~$i$ into a single constraint,
over these new domain values.   Note that values stemming from different instances are different, so that a constraint is hit only once in an intended solution
(where all values come from the same instance).  We finally need to add constraints to ensure that all variables take values stemming from the same input instance.

For this, assign to each input instance a number from~$1$ to~$t$ as its ID, and write this in binary form.  Let~$l=\lceil \log_2 t \rceil = O(m \log n)$ be the number of bit levels needed. 
For each pair of values~$(i,i+1)$,~$1 \leq i < n$, and each bit level~$j$,~$1 \leq j \leq l$, add a testing constraint consisting of all values of~$v_i$ for which
the~$j$:th digit of the ID of the originating instance is~$1$, and all values of~$v_{i+1}$ for which the~$j$:th digit of the ID is~$0$.  
This makes~$O(mn\log n)$ extra edges.  
If variables~$v_i$ and~$v_j$ take values from different input instances, then their IDs will differ in some position, which will lead to one of these testing edges
being hit twice or not at all.  Otherwise, each testing edge is hit exactly once.
By transitivity, this forces all variables in the composed instance to take values from the same input instance.
This completes the compositionality proof, showing that Exact CSP with parameter~$m+n$ admits no polynomial kernelization.

The result for Exact Hitting Set with parameter~$m$ follows by a simple reduction.  Let the vertices of the hitting set be the individual variable values,
create for each variable an edge containing all its values, and retain all constraints as edges.  We get an equivalent instance with~$m+n$ edges.
\end{proof}

We can now show the main result of this section. 

\begin{theorem}  \label{th:lower}
Let~$\Gamma$ be a constraint language which is not \mergeable.  Then \MOS{\Gamma} is either polynomial-time solvable, or does not admit a polynomial kernelization unless \phcollapse. 
\end{theorem}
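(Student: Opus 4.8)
The plan is to distinguish two cases according to Theorem~\ref{th:mosnp}. If~$\Gamma$ is zero-valid, Horn, or width-2 affine, then \MOS{\Gamma} is polynomial-time solvable and there is nothing to prove, so assume \MOS{\Gamma} is \NP-complete. Since $\Gamma$ is not \mergeable, some relation~$R\in\Gamma$ is not \mergeable, so Lemma~\ref{lm:constant} lets us use the constraints~$(x=0)$,~$(x=1)$, and~$(x=y)$ under the weight bound, and Lemma~\ref{lm:gammalogcost} lets us install log-cost selection formulas of arbitrary arity. With these primitives the plan is to build a polynomial time and parameter transformation from Exact Hitting Set$(m)$ to \MOS{\Gamma}; since such transformations preserve polynomial kernelizability~\cite{DBLP:conf/esa/BodlaenderTY09} and Exact Hitting Set$(m)$ admits none unless \phcollapse (Lemma~\ref{lm:ehsnokernel}), the theorem follows.

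The transformation takes a hypergraph~$\h$ with edges~$E_1,\dots,E_m$ over a universe~$U$, with parameter~$m$. The first and crucial step is a preprocessing: delete every vertex that lies in no edge, then keep only one representative of each ``signature''~$\mathrm{sig}(v)=\{\,\ell : v\in E_\ell\,\}$. A valid exact hitting set is preserved in both directions by this identification, because the signatures of the chosen vertices must be pairwise disjoint with union~$\{1,\dots,m\}$ and representatives inside one signature class are interchangeable; afterwards~$|U|\le 2^m$, so in particular every edge has size at most~$2^m$. Now build the formula~$\F$. For each edge~$E_\ell$ add a private copy of a log-cost selection formula of arity~$|E_\ell|$ (padded up to a power of two with false-forced outputs if needed), whose output variables are named~$y_{\ell,v}$ for~$v\in E_\ell$ and whose auxiliary variables are fresh; here the threshold is~$w_\ell=O(\log|E_\ell|)=O(m)$. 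For every vertex~$v$ and every pair of edges~$\ell,\ell'$ containing~$v$, add the genuine two-variable constraint~$(y_{\ell,v}=y_{\ell',v})$, so that the variables~$\{\,y_{\ell,v} : \ell\ni v\,\}$ all carry one common value recording ``$v$ is chosen''. Finally set the weight bound~$k=m+\sum_{\ell=1}^{m}w_\ell=O(m^2)$, which is polynomial in the parameter, and note that the whole construction is polynomial in the size of the preprocessed instance.

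For correctness, suppose first that~$S\subseteq U$ is an exact hitting set; then~$\sum_\ell|S\cap E_\ell|=m$, so~$|S|\le m$. Setting~$y_{\ell,v}=1$ exactly for the~$v\in S$ makes exactly one output true in every gadget, which by Lemma~\ref{lm:gammalogcost} extends to a satisfying assignment of that gadget with exactly~$w_\ell$ auxiliary variables true; the total weight is~$\sum_{v\in S}|\mathrm{sig}(v)|+\sum_\ell w_\ell=m+\sum_\ell w_\ell=k$. Conversely, any satisfying assignment of~$\F$ of weight at most~$k$ makes, in every gadget, at least one output and at least~$w_\ell$ auxiliary variables true; letting~$S$ be the set of vertices whose common~$y$-value is~$1$, every edge is hit at least once, and the weight is at least~$\sum_{v\in S}|\mathrm{sig}(v)|+\sum_\ell w_\ell$, forcing~$\sum_{v\in S}|\mathrm{sig}(v)|=\sum_\ell|S\cap E_\ell|\le m$; together with the lower bound of~$m$ this yields~$|S\cap E_\ell|=1$ for every~$\ell$, and undoing the identification gives an exact hitting set of~$\h$. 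Invoking Lemma~\ref{lm:ehsnokernel} and~\cite{DBLP:conf/esa/BodlaenderTY09} then completes the proof.

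The step I expect to be the main obstacle is keeping the parameter polynomial: a naive per-edge selection would cost~$\sum_\ell O(\log|E_\ell|)$ true variables, which need not be polynomial in~$m$ for an unrestricted universe, so the signature-collapsing preprocessing that bounds~$|U|$ by~$2^m$ (hence each~$w_\ell$ by~$O(m)$) is essential. The second delicate point is the backward direction, where the weight bound has to force each edge to be hit \emph{exactly} once; this works precisely because linking the output copies by equality constraints makes a chosen vertex~$v$ cost~$|\mathrm{sig}(v)|$ units of weight, so that~$\sum_{v\in S}|\mathrm{sig}(v)|\le m$, and thus the exactness, can be read off directly from the budget.
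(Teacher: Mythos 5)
Your proof is correct and follows essentially the same route as the paper: the dichotomy via Theorem~\ref{th:mosnp}, then a polynomial time and parameter transformation from Exact Hitting Set$(m)$ built from per-edge log-cost selection formulas with equality-linked occurrence variables and budget~$k=m+\sum_\ell w_\ell$. The only (harmless) deviation is how you bound the universe by~$2^m$ --- you collapse vertices with equal signatures, whereas the paper simply solves instances with more than~$2^m$ vertices outright in polynomial time via~\cite{BH08}; both yield~$w_\ell=O(m)$ and~$k=O(m^2)$.
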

\begin{proof}
By Theorem~\ref{th:mosnp}, \MOS{\Gamma} is either polynomial-time solvable or \NP-complete; assume that it is \NP-complete.
By Lemma~\ref{lm:constant}
we have both constants and the constraint~$(x=y)$, and by Lemma~\ref{lm:gammalogcost} we can implement log-cost selection formulas.  
It remains only to describe the polynomial time and parameter transformation from Exact Hitting Set$(m)$ to \MOS{\Gamma}.  

Let~$\h$ be a hypergraph.  If~$\h$ contains more than~$2^m$ vertices, then it can be solved in time polynomial in the input length~\cite{BH08}; 
otherwise, we create a formula~$\F$ and fix a weight~$k$ so that~$(\F,k)$ is positive if and only if~$\h$ has an exact hitting set.  
Create one variable~$y_{i,j}$ in~$\F$ for every occurrence of a vertex~$v_i$ in an edge~$E_j$ in~$\h$.
For each edge~$E \in \h$, create a selection formula over the variables representing the occurrences in~$E$. 
Finally, for all pairs of occurrences of each vertex~$v_i$, add constraints~$(y_{i,j}=y_{i,j'})$,
and fix~$k=m+\sum_{E \in \h} w_{|E|}$, where~$w_i$ is the weight of an~$i$-selection formula.
We have an upper bound on the value of~$k$ of~$O(m \log n) = O(m^2)$.  

Now solutions with weight exactly~$k$ correspond to exact hitting sets of~$\h$.
Note that~$k$ is the minimum possible weight of the selection formulas, which is taken if exactly one occurrence in each edge is picked.
By the definition of log-cost selection formulas, any solution where more than one occurrence has been picked (if such a solution is possible at all)
will have a total weight which is larger than this, if the weight of the~$y$-variables is counted as well, and thus such a solution to~$\F$ of weight at most~$k$ is not possible. 

As Exact Hitting Set$(m)$ is \NP-complete, it follows from~\cite{DBLP:conf/esa/BodlaenderTY09} that a polynomial kernelization for \MOS{\Gamma} would imply the same for Exact Hitting Set$(m)$, giving our result. 
\end{proof}

Finally, let us remark that Lemma~\ref{lm:constant} can be adjusted to provide~$(x=1)$ and~$(x=y)$ without the use of repeated variables, and that using standard techniques (see~\cite{hfreenokernel} and Theorem~\ref{th:upperfinal}), we can show that the lower bound still applies under the restriction that constraints contain no repeated variables.
Such a restriction can be useful in showing hardness of other problems, e.g., as in~\cite{hfreenokernel}.

\section{Conclusions} \label{section:conclusion}

We presented a dichotomy for \MOS{\Gamma} for finite sets of relations~$\Gamma$, assuming that the polynomial hierarchy does not collapse. 
The characterization of the dichotomy is a new concept we call \mergeability.  
We showed that \MOS{\Gamma} admits a polynomial kernelization if the problem is in \P or if every relation in~$\Gamma$ is \mergeable, while in every other case no polynomial kernelization is possible unless \phcollapse, in which case the polynomial hierarchy would collapse to the third level.

It might be interesting to compare our kernelization dichotomy to the approximation properties of \MOS{\Gamma}, as characterized by Khanna et al.~\cite{KhannaSTW00}.  The \mergeability property cuts through the classification of Khanna et al.\ as follows (we use the terms from B.4 of~\cite{KhannaSTW00}).
For every~$\Gamma$ such that \MOS{\Gamma} is known to be in APX, \MOS{\Gamma} admits a polynomial kernelization, while no problem identified as being Min Horn Deletion-complete is \mergeable.\footnote{Every \mergeable problem closed under disjunction is IHSB+, i.e., APX-complete.}
The remaining classes are cut through (e.g., among the affine relations, the relation~$(x + y + z = 1 \textrm{ (mod 2)})$ is \mergeable, while~$(x + y + z=0 \textrm{ (mod 2)})$ is not).  We also get kernelizations for some problems where the corresponding SAT problem is \NP-complete, e.g., Min Ones Exact Hitting Set for sets of bounded arity, where no approximation is possible unless \P=\NP.

\section*{Acknowledgements}
We are thankful to Gustav Nordh and D\'aniel Marx for helpful and interesting discussions.

\end{document}